\documentclass{article}


\usepackage[utf8]{inputenc}					

\usepackage[margin=1in]{geometry}		

\usepackage{amsmath,amsthm,amsfonts}		
\usepackage{dsfont}
\usepackage{mathtools}
\usepackage[usenames,dvipsnames]{xcolor}

\usepackage[inline, shortlabels]{enumitem}  



\usepackage[linesnumbered,ruled,vlined]{algorithm2e}
\RestyleAlgo{boxruled}

\usepackage{graphicx}						

\usepackage{hyperref}						
\hypersetup{
	hidelinks,
    colorlinks = true,
    linkcolor = MidnightBlue,
    citecolor = MidnightBlue
}

\usepackage[noabbrev, capitalize,nameinlink]{cleveref}			

\makeatletter
\newtheorem*{rep@theorem}{\rep@title}
\newcommand{\newreptheorem}[2]{%
\newenvironment{rep#1}[1]{%
 \def\rep@title{#2 \ref{##1} (restated)}%
 \begin{rep@theorem}}%
 {\end{rep@theorem}}}
\makeatother

\newtheorem{theorem}{Theorem}
\newreptheorem{theorem}{Theorem}

\newtheorem{lemma}[theorem]{Lemma}

\newtheorem{claim}[theorem]{Claim}

\newtheorem{corollary}[theorem]{Corollary}

\theoremstyle{definition}
\newtheorem{definition}[theorem]{Definition}

\crefname{claim}{Claim}{Claims}
\crefname{observation}{Observation}{Observations}
\crefname{equation}{Eq.}{Eqs.}


\DeclareMathOperator{\bbE}{\mathbb{E}}
\DeclareMathOperator{\bbP}{\mathbb{P}}
\DeclareMathOperator{\bbN}{\mathbb{N}}
\DeclareMathOperator{\bbR}{\mathbb{R}}

\newcommand{\bbOne}{\mathds{1}}
\newcommand{\pa}[1]{\left( #1 \right)}

\newcommand*{\medcap}{\mathbin{\scalebox{1.5}{\ensuremath{\cap}}}}

\newcommand{\argmax}{\textrm{argmax}}

\newcommand{\E}{{\rm I\kern-.3em E}}
\newcommand{\Var}{\mathrm{Var}}

\newcommand{\dis}{\mathrm{dis}}

\title{Minimalist Leader Election Under Weak Communication}

\author{Robin Vacus\thanks{Humboldt University of Berlin, Germany. E-mail: \texttt{robin.vacus@hu-berlin.de}. Research supported in part by the Deutsche Forschungsgemeinschaft (DFG, German Research Foundation) – project number 539576251. } \and Isabella Ziccardi\thanks{Université Paris Cité, CNRS, IRIF, Paris, France. E-mail: \texttt{isabella.ziccardi@irif.fr}.  Research supported in part by the European QuantERA project QOPT (ERA-NET Cofund 2022-25) and the French PEPR integrated project EPiQ (ANR-22-PETQ-0007).}}

\date{}

\usepackage{tikz}
\usepackage{tablefootnote}

\definecolor{myred}{RGB}{200, 0, 0}

\newcommand{\lead}{\bullet}
\newcommand{\nlead}{\circ}

\newcommand{\Prob}[1]{\bbP\left(#1\right)}
\newcommand{\Expc}[1]{\mathbb{E}\left(#1\right)}

\usepackage[skip = 5pt, indent=10pt]{parskip}

\newcommand{\impliesplus}[1]{\underset{\text{\Cref{#1}}}{\implies}}
\newcommand{\beepcount}{\mathcal{N}^{\mathrm{beep}}}
\newcommand{\master}{\mathcal{M}^\star}

\usepackage{bbding}
\usepackage{pifont}
\usepackage{subcaption}

\newcommand{\protocolname}{BFW\xspace}

\begin{document}

\maketitle 


\begin{abstract}
    We propose a protocol to solve Leader Election within weak communication models such as the beeping model or the stone-age model.
    Unlike most previous work, our algorithm operates on only six states, does not require unique identifiers, and assumes no prior knowledge of the network's size or topology, i.e., it is uniform. We show that under our protocol, the system almost surely converges to a configuration in which a single node is in a leader state. With high probability, this occurs in fewer than $O(D^2 \log n)$ rounds, where $D$ is the network diameter. We also show that this can be decreased to $O(D \log n)$ when a constant factor approximation of~$D$ is known.
    The main drawbacks of our approach are a $\Tilde{\Omega}(D)$
    overhead in the running time compared to algorithms with stronger requirements, and the fact that nodes are unaware of when a single-leader configuration is reached. Nevertheless, the minimal assumptions and natural appeal of our solution make it particularly well-suited for implementation in the simplest distributed systems, especially biological ones.
\end{abstract}




\section{Introduction}

Real distributed systems exhibit widely varying levels of sophistication, and a key challenge in distributed computing is to develop theoretical models suited to each.
For instance, highly complex artificial systems benefit from large bandwidth and substantial computing power, whereas primitive biological entities such as bacteria and insects, or the tiniest robots, must function with limited cognition and communication. Over the past few decades, several models have been designed specifically to study the latter, including population protocols~\cite{AspnesR09}, radio networks~\cite{chlamtac_broadcasting_1985}, the stone-age model~\cite{emek_stone_2013}, and the beeping model~\cite{CornejoK10}. These models share the goal of identifying protocols that efficiently perform core distributed tasks (e.g., broadcast, leader election, MIS) while making minimal assumptions about the agents' capabilities.
However, despite the strict constraints of these models, many algorithms that have been proposed remain relatively complex with respect to biological motivations,
as the focus is generally on performance rather than simplicity.

Consider, for example, the leader election problem in the beeping model of communication, which will be described in \Cref{sec:problem_def}.
As we will see in the dedicated section (\Cref{sec:related_works}), state-of-the-art algorithms in this setting often rely on unique agent identifiers (or methods to generate them)~\cite{DufoulonBB18,ForsterSW14,GhaffariH13,CzumajD19}, and manage to communicate them across the group; something simple organisms like bacteria are unlikely to achieve. Beyond the biological plausibility issue, this also introduces non-trivial technical assumptions: (1) agents must have at least an approximate knowledge of the group size, $n$, and (2) the amount of memory used by the protocol is unbounded as~$n$ increases.
These assumptions are not made lightly; they are necessary to enable the design of efficient, non-trivial algorithms with the best possible properties.
For example, an approximate knowledge of~$n$ is known to be necessary in order to ensure that agents can identify when a leader has been elected~\cite{itai_symmetry_1990}.

However, in a quest for simplicity, some articles restrict the agents' capabilities even further. This approach may involve, for example, limiting the number of memory states accessible to the agents by a constant quantity, which, although arbitrarily large, remains independent of $n$. It can also require the algorithms to be uniform, meaning they do not depend on the communication graph $G$, including its size and diameter.
At the extreme, agents are modeled as probabilistic finite-state machines that are anonymous, identical, and independent of~$G$. Among other things, this prohibits the use of unique identifiers (even when generated by the protocol itself), as agents lack sufficient memory to store them. While such rules are relatively standard in population protocols, they are more rarely adopted in the beeping model, especially the restriction on the number of states. Notable exceptions include~\cite{giakkoupis_distributed_2023} for MIS and~\cite{gilbert_computational_2015} for leader election in the clique.
In this paper, we build upon these works by presenting a simple and natural protocol, described in \Cref{sec:algorithm}, that solves leader election in any graph under the beeping model, while satisfying all the aforementioned restrictions.
Our algorithm can also be implemented in a synchronous version of the stone-age model.

\subsection{Problem Definition} \label{sec:problem_def}

We consider an undirected connected graph $G = (V , E)$ of~$n$ nodes, where there is an edge between two nodes if they can communicate with each-other.
Execution proceeds in discrete rounds. In each round, each node must decide whether to {\em beep} or remain silent ({\em listen}). A node that chooses to listen in a given round will {\em hear} a beep if and only if at least one of its neighbors beeps in the same round. As a consequence, a node cannot differentiate between a single neighbor emitting a beep or multiple neighbors doing so simultaneously.

In this paper, using a similar terminology as in~\cite{gilbert_computational_2015}, a protocol is defined as a probabilistic state machine~$M = (Q_\ell, Q_b, q_s, \delta_\bot, \delta_\top)$ where: $Q_\ell$ and~$Q_b$ are two disjoint sets of states corresponding to listening and beeping, respectively; $q_s$ is the initial state; and $\delta_\bot, \delta_\top$ are probabilistic transition functions, mapping the current state of a node to a distribution over states to enter in the next round.
Specifically, if a node~$u$ is in a beeping state in round~$t$, or if any of its neighbor is, then the next state of~$u$ is sampled according to~$\delta_\top$; otherwise it is given by~$\delta_\bot$.

We focus on {\em uniform} protocols which are independent of~$G$. A direct consequence is that node cannot maintain unique identifiers. Moreover, positive transition probabilities in~$\delta_\bot, \delta_\top$ being fixed with respect to~$n$, they cannot tend to~$0$ as~$n$ increases.

\begin{definition}[Eventual Leader Election] \label{def:eventual_LE}
A protocol in the Beeping model is said to solve \textit{Eventual Leader Election} in time~$T$ if there exists a subset of states $L \subset Q_\ell \cup Q_b$ and a node $u^\star$ s.t. from every round~$t \geq T$, $u^\star$ is the only node whose state belongs to~$L$ in round~$t$.
\end{definition}

Note that this definition of Leader election does not require nodes to commit to a final state; it only requires that a configuration with a single leader is reached and maintained indefinitely (hence the term {\em eventual}). In particular, it means that nodes may never be aware of whether the protocol has converged. This is in contrast with many related works such as~\cite{DufoulonBB18}, which often impose stronger termination conditions (see \Cref{sec:related_works}).
However, our assumption that agents behave independently of~$n$ prevents them from identifying the moment a leader is elected, as noted in~\cite{itai_symmetry_1990}. This justifies our choice of \Cref{def:eventual_LE}.





\subsection{Our Algorithm: \protocolname} \label{sec:algorithm}

\paragraph{Informal description.} At the start of execution, each node is initialized as a leader. 
A leader that does not hear a beep, beeps in the next round with probability~$p > 0$, where~$p \in (0,1)$ may be any constant independent of~$n$ (say~$1/2$).
Non-leader nodes remain silent unless they hear a beep, in which case they beep in the next round.
After each beep, a node (leader or non-leader) becomes ``frozen'' for exactly one round, during which it does not beep and does not react to its environment.
A leader that hears a beep while not being frozen is immediately ``eliminated'', i.e., it becomes a non-leader -- in which case it beeps in the next round.

More precisely, our protocol operates on six states: $\{B^\nlead, F^\nlead, W^\nlead\}$ are used by non-leader nodes, and $\{B^\lead, F^\lead, W^\lead\}$ are reserved for leader nodes, where $B$ stands for {\em Beeping}, $F$ for {\em Frozen}, and~$W$ for {\em Waiting}.
The exact transitions between states are depicted on \Cref{fig:WBF}.

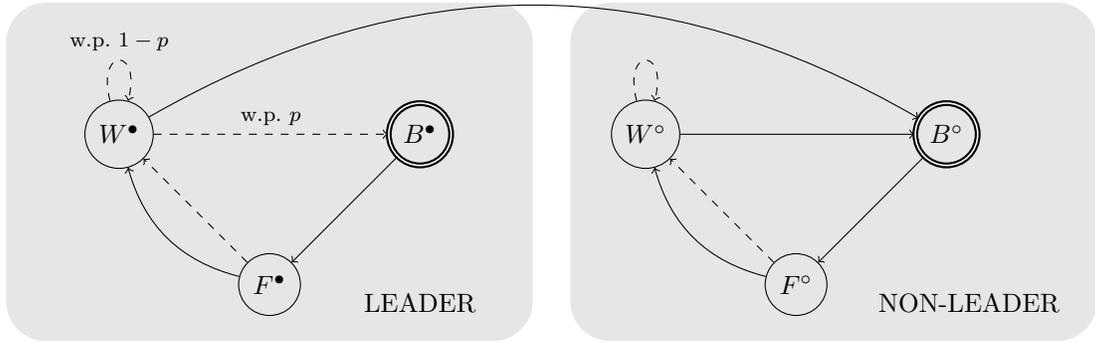
\begin{figure} [htbp]
    \centering
    \begin{tikzpicture}[
    beeping/.style = {draw, double, circle, thick},
    listening/.style = {draw, circle},
    edge_false/.style = {dashed,->},
    edge_true/.style = {->},
    invisible/.style={minimum width=0mm,inner sep=0mm,outer sep=0mm}
    ]

    \tikzset{invisible/.style={minimum width=0mm,inner sep=0mm,outer sep=0mm}}
    
    \fill[rounded corners=15pt, fill=gray!20] (-1.5,8) rectangle ++(7,-4.5);
    \node at (4,4) {LEADER};
    \fill[rounded corners=15pt, fill=gray!20] (6,8) rectangle ++(7,-4.5);
    \node at (11.3,4) {NON-LEADER};
        
	\node[listening] (1) at (0,6.25) {$W^\lead$};
	\node[beeping] (2) at (4,6.25) {$B^\lead$};
	\node[listening] (3) at (2,4.25) {$F^\lead$};
	\node[listening] (4) at (7,6.25) {$W^\nlead$};
	\node[beeping] (5) at (11,6.25) {$B^\nlead$};
	\node[listening] (6) at (9,4.25) {$F^\nlead$};

	\draw[edge_false,loop above] (1) to node[midway,above,align=center] {\footnotesize w.p.~$1-p$} (1) ;
	\draw[edge_false] (1) to node[midway,above,align=center] {\footnotesize w.p.~$p$} (2) ;
	\draw[edge_true] (2) to (3) ;
	\draw[edge_false] (3) to (1) ;
    \draw[edge_true,bend left] (3) to (1) ;
    \draw[edge_true,bend left] (1) to (5) ;
	\draw[edge_false,loop above] (4) to (4) ;
	\draw[edge_true] (4) to (5) ;
	\draw[edge_true] (5) to (6) ;
	\draw[edge_false] (6) to (4) ;
    \draw[edge_true,bend left] (6) to (4) ;
\end{tikzpicture}
    \caption[Definition of our algorithm as a probabilistic finite-state machine]{{\bf Definition of Algorithm} \protocolname { \bf as a probabilistic finite-state machine.}
    The starting state is~$W^\lead$.
    Solid lines indicate transitions corresponding to~$\delta_\top$, which occur when a beep is heard, while dashed lines indicate transitions corresponding to~$\delta_\bot$, which occur when both the node and its neighborhood are listening.
    A node beeps if its state belongs to~$Q_b = \{B^\lead, B^\nlead\}$, which are circled twice in the figure\footnotemark.
    A node is considered as a leader if its state belongs to~$\{B^\lead, F^\lead, W^\lead\}$, which make up the first half of the figure.
    All transitions are deterministic except for~$\delta_\bot(W^\lead)$.
    }
    \label{fig:WBF}
\end{figure}

\footnotetext{Transitions corresponding to~$\delta_\bot$ are not specified for~$\{B^\lead, B^\nlead\}$, since by definition, $\delta_\top$ is systematically used from these states.}

\subsection{Our Results}

We write~$n$ to denote the size of the graph, and~$D$ its diameter.
Our main result is that when~$p$ is fixed independently of~$n$, Algorithm \protocolname always elects a single leader, and we give an upper bound on its convergence time that holds with high probability\footnote{We say that an event holds \emph{with high probability} (w.h.p.) if it holds with probability $1-n^{-\Omega(1)}$, and \emph{almost surely} if it holds with probability~$1$.}.

\begin{theorem} \label{thm:main}
    Fix~$p \in (0,1)$.
    Algorithm \protocolname with parameter~$p$, defined in \Cref{fig:WBF}, solves Eventual Leader Election on any graph $G=(V,E)$ almost surely\footnotemark[2], and elects a single leader in less than~$O(D^2 \log n)$ rounds with high probability. 
    \label{thm:convergence}
\end{theorem}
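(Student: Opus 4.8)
The plan is to prove the two claims of \Cref{thm:main} — almost-sure correctness and the $O(D^2\log n)$ convergence time — separately, first extracting a few structural invariants from \Cref{fig:WBF} and then reducing the elimination dynamics to a coalescing, voter-model-type process whose consensus time is governed by a random walk.

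\textbf{Structural invariants.} First I would record the elementary facts that make \protocolname\ well-behaved. Inspecting \Cref{fig:WBF}, no transition ever leads from a non-leader state $\{B^\nlead,F^\nlead,W^\nlead\}$ into a leader state, so the set of leaders is non-increasing along every execution. Second, a leader can be eliminated only from $W^\lead$ and only upon hearing a beep: a node in $B^\lead$ always moves to $F^\lead$, and a node in $F^\lead$ always moves to $W^\lead$, regardless of what it hears. Hence a node is \emph{shielded} during the round immediately after any beep it emits. The first technical lemma is that the leader count never reaches $0$. The mechanism is that every beep in the system is, causally, a relay of a wave started by the unique random transition $\delta_\bot(W^\lead)$; I would use the shielding property to argue that one cannot arrange for all surviving leaders to be simultaneously caught in $W^\lead$ by incoming fronts, so at least one leader survives each round. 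This requires a genuine invariant rather than a one-round argument, since the front hitting a leader emanates from \emph{other} leaders.

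\textbf{Reduction to coarsening.} The core idea is that a single isolated leader's beep generates a BFS-like ``wave'' propagating outward at unit speed that, thanks to the one-round freeze, never reflects back to eliminate its own originator (this dies out at the antipodal/boundary nodes). When the waves of two neighboring leaders meet, they annihilate along a collision front (a ``domain wall'') without eliminating either leader when their beeps are roughly synchronized; a leader is eliminated precisely when such a front reaches it. Because the beep times are i.i.d.\ with constant rate $p$, the position of each domain wall performs a zero-drift, bounded-increment random walk, and two walls coalesce when they meet. I would make this rigorous by coupling the front positions to a system of coalescing random walks on~$G$, so that ``time until a single leader remains'' becomes the coalescence time of that system.

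\textbf{Time and almost-sure bounds.} Given the reduction, the unbiased walks bounding the leaders' territories meet within $O(D^2)$ rounds in expectation on a graph of diameter~$D$; a concentration/union-bound argument over the at most~$n$ initial walls upgrades this to $O(D^2\log n)$ rounds with high probability. For the almost-sure statement I would use that the configuration space is finite ($6^n$ states) and the leader count is a non-increasing functional, so the chain is almost surely absorbed into states of some minimal count $k^\star$; exhibiting, from every configuration with at least two leaders, a finite positive-probability sequence of transitions that strictly decreases the count (e.g.\ one leader beeps while all others stay silent for $D$ rounds, letting its wave eliminate another leader while the originator remains shielded) shows $k^\star=1$, matching \Cref{def:eventual_LE}.

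\textbf{Main obstacle.} The hardest part is making the ``wave $=$ random walk'' reduction rigorous on a general graph rather than on a path: fronts are interfaces rather than single points, many leaders' waves coexist and interfere, and the freeze/relay bookkeeping must be controlled so that no spurious elimination or revival occurs. I expect the bulk of the work to be (i) a clean combinatorial description of a single round's effect on the configuration of coexisting waves, and (ii) showing the induced front motion is a martingale with bounded increments and a uniformly positive chance of moving, so that standard hitting-time estimates apply despite the lack of independence between nearby fronts.
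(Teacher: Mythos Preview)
Your intuition is right — the wave fronts between two leaders do behave like an unbiased random walk, and that is morally why the $O(D^2\log n)$ bound holds — but the reduction you sketch has a real gap on general graphs, and the paper avoids it entirely by a different invariant.

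\textbf{Where the proposal breaks.} The ``domain wall as random walk'' picture is well-defined on a path: between two leaders there is a single interface point, and each random beep nudges it by one. On a general graph, however, there are many paths between a given pair of leaders, the ``front'' is a set of edges rather than a point, and wave interference from third leaders reshapes it in ways that are not captured by a single one-dimensional martingale. You flag this yourself as the main obstacle, but you do not propose a concrete state variable on which to run the martingale/hitting-time argument. Without that, neither step~(i) nor step~(ii) of your plan can be carried out, and the $O(D^2\log n)$ bound does not follow. Separately, your argument that the leader count never reaches~$0$ is incomplete: the freeze shields a leader for exactly one round after it beeps, not for $D$ rounds, so ``the originator remains shielded while its wave travels'' is false as stated, and on graphs with short cycles a leader's own wave can reach it again while it is in~$W^\lead$.

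\textbf{What the paper does instead.} The paper never tracks fronts. It defines a scalar \emph{flow} $\nu_t(\omega)$ along an oriented path $\omega$ (edge contributes $+1$ if its tail beeps and head waits, $-1$ in the opposite case, $0$ otherwise) and proves a conservation law: $\nu_t(\omega)-\nu_{t-1}(\omega)$ depends only on whether the two \emph{endpoints} of~$\omega$ beep in round~$t$. Summing from round~$0$ gives an ``Ohm's law'': $\nu_t(\omega)=\beepcount_t(v_1)-\beepcount_t(v_k)$ for \emph{every} path~$\omega$ between $v_1$ and~$v_k$. Two consequences do all the work. First, a node can be eliminated only if some neighbour has strictly more beeps, so any leader achieving $\max_w \beepcount_t(w)$ is safe; an easy induction then shows at least one leader survives forever. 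Second, since $|\nu_t(\omega)|$ is at most the length of~$\omega$, any two surviving leaders $u,v$ satisfy $|\beepcount_t(u)-\beepcount_t(v)|\le\dis(u,v)\le D$. But while $u$ and $v$ are leaders, their states are just two \emph{independent} copies of the three-state chain $W\to B\to F\to W$, so their beep counts are the $B$-visit counts of two i.i.d.\ chains; a Berry--Esseen-type anti-concentration bound shows these differ by more than~$D$ after $O(D^2)$ rounds with constant probability, and a $\log n$ repetition plus union bound finishes. The flow invariant thus replaces your geometric front-tracking by a single scalar per node and reduces the multi-leader, multi-path interference problem to a one-dimensional comparison of two independent counters.
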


\paragraph{Beep waves.}

By following our protocol, agents collectively create {\em beep waves} that expand away from leaders. Leaders crossed by a wave are eliminated, and two waves emitted by different leaders ``crash'' against each-others and disappear. As long as leaders beep with different frequencies (which is achieved via randomization), they are gradually eliminated, until only one remains.
Beep waves are a common method for transmitting information over long distances in the beeping model~\cite{GhaffariH13,CzumajD19},
where they are typically used to convey messages bit by bit. In contrast, our approach is minimalist, as nodes do not need to (and cannot) monitor the specific wave pattern received over an extended period of time.

\paragraph{Ideas behind the analysis.}

The main difficulty is to show that not all leaders are eliminated. This requires proving that waves never return to their origin, so that a leader~$u$ may not eliminate itself. More generally, this requires ruling out any scenario involving a set of leaders~$v_1,\ldots,v_k$ where a wave from~$v_i$ eliminates~$v_{i+1}$, and a wave from~$v_k$ eliminates~$v_1$.
To achieve this, we introduce a quantity that we call {\em flow} (\Cref{def:paths}), because it shares some of the properties commonly associated with this term.
Informally, the flow over a given oriented path in round~$t$ measures the difference between the number of waves traveling from its start to its end, and those traveling in the opposite direction. By construction of our algorithm, the flow between two leaders can be directly determined from the number of beeps each leader has emitted so far (see \Cref{cor:flow_path}). This property enables us to navigate the apparent complexity of the process, and the result then follows from probabilistic arguments.
Considerations about the flow are fully deterministic, and are discussed in \Cref{sec:flow}, while probabilistic proofs are presented in \Cref{sec:probabilistic_analysis}. The proof of \Cref{thm:main} itself can be found in \Cref{sec:main_proof}.

\paragraph{On the use of randomness.} 

While some effort have been devoted to developing deterministic algorithms~\cite{ForsterSW14,DufoulonBB18}, our protocol requires a certain degree of randomness to function. This randomness is, in fact, strictly necessary in our setting, as no deterministic algorithm can break symmetry on certain graphs when all agents are identical.
Moreover, we make a parsimonious use of random bits, since agents only need one per round when~$p=\frac{1}{2}$.



\paragraph{Faster convergence in the non-uniform case.} For the sake of completeness, we identify a value of~$p$ that yields a better convergence time, closer to what can be achieved by more sophisticated algorithms (see \Cref{sec:related_works}). However, its dependence on~$D$ makes the protocol non-uniform. In addition, this value might become arbitrarily small as~$D$ increases, necessitating a large number of random bits in every round.


\begin{theorem} \label{thm:side}
    Algorithm \protocolname with parameter~$p = \frac{1}{D+1}$ solves Eventual Leader Election in less than~$O(D \log n)$ rounds with high probability.
\end{theorem}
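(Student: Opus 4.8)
The plan is to reuse the entire machinery developed for the proof of \Cref{thm:main}, since the algorithm is unchanged and only the parameter~$p$ is specialized to~$p = \frac{1}{D+1}$. I would first recall how the $O(D^2 \log n)$ bound arises for general constant~$p$: the $\log n$ factor comes from a union bound over the (at most~$n$) leaders and from ensuring that the distinguishing ``frequency'' events hold with high probability, while the $D^2$ factor should decompose as a product of one factor~$D$ coming from the time a beep wave needs to propagate across the graph, and a second factor~$D$ coming from the inverse of the beeping probability (roughly, a leader beeps once every $\Theta(1/p)$ rounds when $p$ is constant, and one needs $\Theta(D)$ successful rounds for the elimination mechanism to resolve a pair of leaders). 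Setting $p = \frac{1}{D+1}$ trades these two factors against each other.

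The key step is to isolate, in the existing analysis, exactly where the factor $1/p$ and the factor $D$ enter, and to verify that the relevant probabilistic estimates still hold with high probability when $p$ shrinks to~$\frac{1}{D+1}$ rather than staying constant. Concretely, I would re-examine the probabilistic arguments of \Cref{sec:probabilistic_analysis}: the core event is that two surviving leaders beep at sufficiently different effective frequencies so that the flow between them (\Cref{def:paths}, \Cref{cor:flow_path}) accumulates a bias that forces one of them to be eliminated within the allotted time. With $p = \frac{1}{D+1}$, a leader beeps roughly once per $\Theta(D)$ rounds, so the natural time unit becomes $\Theta(D)$ rounds; over a horizon of $O(D \log n)$ rounds this yields $\Theta(\log n)$ ``beep attempts,'' which is precisely the budget needed to drive the high-probability concentration in a standard Chernoff/union-bound style argument. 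The point is that a single wave still takes $\Theta(D)$ rounds to cross the graph, and now the spacing between consecutive waves from a leader is also $\Theta(D)$, so the whole process is governed by a single timescale of order~$D$ rather than the two multiplied timescales that produced $D^2$.

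I would therefore structure the proof as a short lemma verifying that, with $p = \frac{1}{D+1}$, each pair of competing leaders is resolved within $O(D \log n)$ rounds with probability at least $1 - n^{-c}$ for a suitable constant~$c$, followed by a union bound over all $O(n)$ leaders (and over the $O(\log n / \log(1/(1-p)))$-style phases if the existing proof is phased). The almost-sure correctness (that exactly one leader survives, and no leader eliminates itself) is inherited verbatim from the flow-based deterministic arguments of \Cref{sec:flow}, which do not depend on the value of~$p$ at all; only the running-time bound needs the new parameter choice. I expect the main obstacle to be checking that the concentration inequalities do not degrade when $p \to 0$ with~$D$: one must confirm that the number of independent random choices each leader makes within the $O(D\log n)$-round window is $\Omega(\log n)$ and not smaller, so that the failure probability remains polynomially small in~$n$. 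This requires care in the regime where $D$ is large relative to $\log n$, where the budget of beep attempts per leader is ample, versus the regime where $D$ is small, where one should fall back on the observation that $p = \frac{1}{D+1}$ is then bounded below by a constant and the analysis coincides with that of \Cref{thm:main}.
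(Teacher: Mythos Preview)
Your decomposition of the $D^2$ factor is incorrect, and this is a genuine gap. In the proof of \Cref{thm:main}, the $D^2$ does \emph{not} split as ``propagation time $D$'' times ``$\Theta(D)$ successful rounds''; it comes entirely from anti-concentration. Elimination of one leader in a pair $(u,v)$ is deduced from $|N_t^{(u)}-N_t^{(v)}|>D$ (\Cref{claim:elimination_condition} via \Cref{lemma:max_beepcount_difference}), and since $\Var(N_t)=\Theta(t)$ for constant $p$, one needs $t=\Theta(D^2)$ for the standard deviation to reach $D$. If you rerun this with $p=\frac{1}{D+1}$ over a window of $t=\Theta(D\log n)$ rounds, a renewal-theory computation gives $\Var(N_t)=\Theta(t/D)=\Theta(\log n)$, so the typical gap $|N_t^{(u)}-N_t^{(v)}|$ is of order $\sqrt{\log n}$, which is far below $D$ whenever $D\gg\sqrt{\log n}$. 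The anti-concentration route therefore cannot deliver the $O(D\log n)$ bound except in the trivial regime $D=O(\sqrt{\log n})$.

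The paper's proof of \Cref{thm:side} avoids anti-concentration altogether and uses a different, more elementary mechanism. The key input is \Cref{lemma:traveling_beep}: a beep-count advantage of merely $1$ (not $D$) already forces the trailing leader to beep within at most $D$ additional rounds. The point of choosing $p=\frac{1}{D+1}$ is that, in any window of length $\Theta(D)$, with constant probability one leader's coupled chain visits $B$ at least once while the other's does not visit $B$ at all (the relevant probabilities are $(1-p)^{\Theta(D)}=\Theta(1)$ and $1-(1-p)^{\Theta(D)}=\Theta(1)$). Under that event, the trailing leader is forced to beep by an incoming wave at a round where its own chain is not in state $B$; by the coupling this means it is in $B^\nlead$, i.e., it has been eliminated. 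Thus each window of length $2D+1$ eliminates at least one leader of any fixed pair with constant probability, and iterating $\Theta(\log n)$ windows plus a union bound over pairs gives the claim. Your plan never invokes \Cref{lemma:traveling_beep} or the ``forced beep without a spontaneous beep implies elimination'' observation, and without them the argument does not go through.
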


Although the theorem features an exact value for the parameter~$p$, its proof can be easily generalized to the case that only a constant factor approximation of~$D$ is known. The proof of \cref{thm:side} can be found in \cref{sec:side_result}.


\subsection{Related Works} \label{sec:related_works}

\paragraph{Other algorithms for Leader Election in the Beeping model.}

Other works typically consider
a stronger definition of Leader Election. They often enforce a {\em safety} condition, ensuring that the population never contains more than one node in a leader state. Moreover, they require nodes to {\em terminate}, i.e., to commit on their final state. Some works also require nodes to detect when a final configuration has been reached, and even the identity of the elected leader.

Importantly however, these works do not attempt to minimize the number of states available to the agents, with the notable exception of~\cite{gilbert_computational_2015}, which is discussed separately below. To the best of our knowledge, all others rely on storing integers as large as~$\Theta(\log n)$, resulting in~$\Theta(n)$ different states. This relatively large amount of memory allows for more sophisticated algorithms, and is crucial to satisfying the stronger requirements listed above.
In terms of running time, the algorithm in~\cite{GhaffariH13} converges in~$O(D + \log n \log \log n) \cdot \min \{ \log \log n, \log \frac{n}{D} \}$ rounds with high probability.
The same paper states a~$\Omega(D + \log n)$ lower bound, applicable to both deterministic and probabilistic protocols.
A subsequent work~\cite{ForsterSW14} presents a deterministic algorithm converging in~$O(D \log n)$ rounds.
Since then, more efficient algorithms (both random and deterministic) have been identified that reach the lower bound: for example, the protocol in ~\cite{DufoulonBB18} converges in~$O(D + \log n)$ rounds, and the one in~\cite{CzumajD19} runs in time proportional to the broadcasting time. A comparison of the performance of these algorithms and the protocol presented in this paper is provided in \Cref{tab:result}.

In contrast, the protocols in~\cite{gilbert_computational_2015} follow the same simplicity criteria considered here, while addressing a stronger variant of the problem than ours. Specifically, they enforce the aforementioned safety condition, and also prohibit nodes from leaving the leader state. However, unlike our setting, they allow for an error probability~$\varepsilon$, which affects both their convergence time and the number of states used by a factor of approximately~$\log \frac{1}{\varepsilon}$. Moreover, these protocols are limited to single-hop (i.e., fully connected) networks.
Our work can be seen as an extension to arbitrary graphs, using a somewhat simpler, albeit less efficient, algorithm.

\paragraph{Radio Networks and the Stone Age model.} Leader Election has also received some attention within Radio Networks~\cite{chlamtac_broadcasting_1985} and the Stone Age model~\cite{emek_stone_2013}, which are closely related to the Beeping model.
In Radio Networks, messages are usually made up of~$\Theta(\log n)$ bits. However, they are received by a node only when exactly one neighbor is emitting in a given round; when several neighbors are emitting simultaneously, the content of the message is lost, and the collision might or might not be detected depending on the setting.
The leader election problem itself is discussed in, e.g.,~\cite{chlebus_electing_2012,kowalski_leader_2009}.
In the Stone Age model, nodes are activated asynchronously, and communicate by displaying messages within an alphabet of finite size. When activated, a node~$u$ can count the number of neighbors holding the same message~$\sigma$, but only up to some threshold~$b \geq 1$.
Importantly, both models allow nodes to accurately detect the situation where a single neighbor emits a signal (unless~$b=1$), which significantly impacts algorithm design.
Leader election in the stone-age model has been addressed in \cite{EmekK21}, where a randomized self-stabilizing solution is proposed, with state space $O(D)$ and stabilizing in $O(D \log n)$ interactions w.h.p. The algorithm does not assume unique IDs or any knowledge of $n$, but assumes the knowledge of $D$, and has no termination detection.

\paragraph{Population protocols.} 

Leader election has been studied in the Population Protocols model~\cite{AspnesR09}. In the classical definition of this model, an edge is chosen uniformly at random at each time step, after which the corresponding nodes interact and may update their states.
As in this paper, leader election in Population Protocols is typically understood as eventual, meaning that detecting termination is not required. Most works assume that the communication graph is fully connected.
On the clique, leader election requires at least $\Omega(n^2)$ expected interactions with constantly many states~\cite{DotyS18},
$\Omega(n^2/\mathrm{polylog}(n))$ expected interactions with~$O(\log \log n)$ states~\cite{AlistarhAEGR17}, and~$\Omega(n \log n)$ when the number of states is unbounded~\cite{SudoM20}. The protocol in~\cite{BerenbrinkGK20} achieves optimal space and time complexity, using $O(\log \log n)$ states and $O(n \log n)$ expected interactions.
For arbitrary graphs, a $O(\log^2 n)$ state algorithm has been recently identified~\cite{alistarh_nearoptimal_2022}, whose convergence time is~$O(\text{Broadcast time} \cdot \log n)$.
Overall, although the Beeping Model and Population Protocols share similar motivations, they exhibit significant differences that make it difficult to compare convergence times across the two settings.

\begin{table}[t]
\centering
\caption{Overview of existing results regarding Leader Election in the Beeping model.}
\label{tab:result}
\begin{tabular}{c|c|c|c|c|c|c}
Ref.                                  & Round Complexity                                                                                                   & Unique IDs & Knowledge & Safety & States &\begin{tabular}[c]{@{}c@{}}Termination \\ Detection\end{tabular} \\ \hline
\cite{GhaffariH13}   & \begin{tabular}[c]{@{}c@{}}$O(D+ \log n \log \log n)$\\ $\cdot \min\{\log \log n, \log \frac{n}{D}\}$\end{tabular} & yes        & $n,D$     & w.h.p. & $\Omega(n)$ & yes 
\\ \cite{ForsterSW14} & $O(D \log n)$ & yes & none & det. & $\Omega(n)$ & yes

\\
\cite{DufoulonBB18}  & $O(D + \log n)$                                                                                                    & yes        & none      & det.   & $\Omega(n)$ & yes                                                              \\
\cite{DufoulonBB18} & $O(D+\log n)$                                                                                                      & no         & $n$       & w.h.p. & $\Omega(n)$ & yes                                                              \\
\cite{EmekK21} & $O(D \log n)$  & no & $D$ & w.h.p.\tablefootnote{The algorithm is self-stabilizing and uses a constant number of channels.} & $\Omega(D)$ & no
\\
\textbf{This Paper}                            & $O(D^2 \log n)$                                                                                                    & no         & none      & w.h.p. & $O(1)$ & no                                                               \\
\textbf{This Paper}                            & $O(D \log n)$                                                                                                      & no         & $D$       & w.h.p. & $O(1)$ & no\tablefootnote{In this case, assuming the additional knowledge of~$n$, the algorithm could stop after~$\Omega(D \log n)$ rounds to achieve termination detection w.h.p.}                                                             
\end{tabular}
\end{table}


\section{Notations and Definitions}

For~$d \in \{0,\ldots,D\}$, we write~$N_d(u)$ to denote the~$d$-neighborhood of~$u$, that is, the set of nodes which are at distance exactly~$d$ from~$u$.

For any state in~$\{W^\nlead,B^\nlead,F^\nlead,W^\lead,B^\lead,F^\lead\}$, we add a subscript~$t$ to denote the set of nodes that are in this state in round~$t$. Moreover, we write~$B_t = B^\nlead_t \cup B^\lead_t$ the set of all beeping nodes (leaders and non-leaders), and similarly, we write~$W_t = W^\nlead_t \cup W^\lead_t$ and~$F_t = F^\nlead_t \cup F^\lead_t$.
For example, with this notation, we write ``$u \in B_t$'' to express the fact that~$u$ is beeping in round~$t$.

We write~$\beepcount_t(u)$ the number of times that~$u$ beeped up to round~$t$ (included):
\begin{equation*}
    \beepcount_t(u) := \mathrm{card} \{ s \leq t, u \in B_s \}.
\end{equation*}


\begin{definition} [Paths] \label{def:paths}
    A {\em path} is a sequence of oriented\footnote{This definition of a path using oriented edges does not contradict our assumption that~$G$ is an undirected graph; rather, we simply require that each oriented edge in the path corresponds to an undirected edge in~$G$ with the same endpoints.} edges~$(e_1,\ldots,e_k)$ for which there is a sequence of vertices~$(v_1,\ldots,v_{k+1})$ such that for every~$i\in\{1,\ldots,k\}$, $e_i = (v_i,v_{i+1})$.
    We refer to~$(v_1,\ldots,v_{k+1})$ as the {\em vertex sequence} of~$\omega$.
    Importantly, we do not require edges and vertices in a path to be distinct.
\end{definition}

\begin{definition} [Flow] \label{def:flow}
    The {\em flow} along an oriented edge~$e = (u,v)$ in round~$t$ is defined as
    \begin{equation*}
        \nu_t(e) = \begin{cases}
            +1 & \text{if } u \in B_t \text{ and } v \in W_t, \\
            -1 & \text{if } u \in W_t \text{ and } v \in B_t, \\
            0 & \text{otherwise.}
        \end{cases}
    \end{equation*}
    Similarly, the flow can be defined along a path~$\omega = (e_1,\ldots,e_k)$ as
    \begin{equation*}
        \nu_t(\omega) = \sum_{i=1}^{k} \nu_t(e_i).
    \end{equation*}
    Note that as a consequence of the definition, $|\nu_t(e)| \leq 1$ and
    \begin{equation} \label{eq:max_flow}
        |\nu_t(\omega)| \leq k.
    \end{equation}
\end{definition}

\section{Deterministic Properties of the Flow} \label{sec:flow}

In this section, we establish a few properties of our protocol (\Cref{fig:WBF}) based on our notion of flow (\Cref{def:flow}).
We assume that all nodes start in a Waiting state, and that there is at least one leader in round~$0$, or in other words:
\begin{equation} \label{eq:initialization_condition}
    V \subseteq W_0 \quad \text{ and } \quad W_0^\lead \neq \emptyset.
\end{equation}
In the following claim, we list all the conditions on which the other results of the section rely (in addition to \Cref{eq:initialization_condition}).
Interestingly, none of these conditions depend on the rule governing when to go from~$W^\lead$ to~$B^\lead$ (which is a simple coin toss in our case). In fact, all the results in this section hold deterministically and independently of this rule.
\begin{claim} [Basic Observations] \label{claim:transitions}
    Let~$\{u,v\} \in E$.
    For every~$t\geq 0$,
    \begin{align}
        u \in W_t &\implies u \notin F_{t+1}, \label{eq:W_next} \\
        u \in B_t &\implies u \in F_{t+1}, \label{eq:B_next} \\
        u \in F_t &\implies u \in W_{t+1}, \label{eq:F_next} \\
        u \in B_t \text{ and } v \in W_t &\implies v \in B^\nlead_{t+1}, \label{eq:WB_next}
    \end{align}
    and for every~$t>0$,
    \begin{align}
        u \in W_t &\implies u \notin B_{t-1}, \label{eq:W_previous} \\
        u \in B_t &\implies u \in W_{t-1}, \label{eq:B_previous} \\
        u \in F_t &\implies u \in B_{t-1}, \label{eq:F_previous} \\
        u \in F_t \text{ and } v \in W_t &\implies v \in F_{t-1}, \label{eq:WF_previous} \\
        u \in B_t^\nlead &\implies N_1(u) \cap B_{t-1} \neq \emptyset. \label{eq:B_nlead_previous}
    \end{align}
\end{claim}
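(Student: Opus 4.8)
The plan is to read every implication directly off the transition diagram in \Cref{fig:WBF}, viewing it as a six-vertex graph whose directed edges are tagged according to whether they belong to $\delta_\bot$ or to $\delta_\top$. Apart from the coin toss at $W^\lead$, which the claim explicitly allows me to disregard, the sole source of branching is the choice between these two functions: a node applies $\delta_\top$ exactly when it or one of its neighbors beeps, and $\delta_\bot$ otherwise. The whole argument therefore follows one template. To prove a forward (``next'') implication I would list the outgoing edges of the hypothesized state and check the conclusion for every admissible successor; to prove a backward (``previous'') implication I would instead list the incoming edges of the hypothesized state.

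For the forward implications: from any beeping state the node beeps, so $\delta_\top$ is forced and the diagram gives $B^\lead \to F^\lead$ and $B^\nlead \to F^\nlead$, establishing \eqref{eq:B_next}. From either frozen state both $\delta_\bot$ and $\delta_\top$ lead to the matching waiting state, giving \eqref{eq:F_next}; and from either waiting state all successors lie in $\{W^\lead, B^\lead, B^\nlead, W^\nlead\}$, none of them frozen, giving \eqref{eq:W_next}. For \eqref{eq:WB_next}, a beeping neighbor $u$ makes $v$ hear a beep, so $v$ applies $\delta_\top$, which sends both $W^\lead$ and $W^\nlead$ to $B^\nlead$, hence $v \in B^\nlead_{t+1}$.

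For the backward implications: \eqref{eq:W_previous} is just the contrapositive of \eqref{eq:B_next}. For \eqref{eq:B_previous} and \eqref{eq:F_previous} I would inspect incoming edges, noting that every arrow entering a beeping state comes from a waiting state, and every arrow entering a frozen state comes from a beeping state via $\delta_\top$. For \eqref{eq:WF_previous} I would first apply \eqref{eq:F_previous} to $u \in F_t$ to obtain $u \in B_{t-1}$, so $v$ hears $u$'s beep and reaches round $t$ through $\delta_\top$; since the only $\delta_\top$-edges landing in a waiting state leave a frozen state, this forces $v \in F_{t-1}$. For \eqref{eq:B_nlead_previous} I would use that the only edges into $B^\nlead$ are $\delta_\top$-edges out of $W^\lead$ and $W^\nlead$: as $u$ was waiting and hence silent in round $t-1$, the beep that triggered $\delta_\top$ must have come from a neighbor, so $N_1(u) \cap B_{t-1} \neq \emptyset$.

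I expect the only real subtlety to be keeping the $\delta_\bot$/$\delta_\top$ dichotomy straight rather than any conceptual hurdle. The forward claims about listening states must hold under both transition functions, since the hypotheses do not reveal which one applies, whereas the more delicate backward claims \eqref{eq:WF_previous} and \eqref{eq:B_nlead_previous} go through precisely because one can first deduce that a beep was heard and thereby reduce the case analysis to $\delta_\top$ alone. With only six states and a handful of edges, this bookkeeping is short, and I anticipate no deeper obstacle.
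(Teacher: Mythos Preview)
Your proposal is correct and follows essentially the same approach as the paper: both read the implications directly off the transition diagram, with only \eqref{eq:WB_next} and \eqref{eq:WF_previous} requiring a short extra step. The one minor difference is in \eqref{eq:WF_previous}: the paper first uses \eqref{eq:W_previous} to place $v$ in $W_{t-1}\cup F_{t-1}$ and then rules out $W_{t-1}$ via \eqref{eq:WB_next}, whereas you argue directly that $v$ must have used $\delta_\top$ (because $u\in B_{t-1}$) and that the only $\delta_\top$-edges into a waiting state leave a frozen state; both arguments are equally short and valid.
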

\begin{proof}
    \Cref{eq:W_next,eq:B_next,eq:F_next,eq:W_previous,eq:B_previous,eq:F_previous,eq:B_nlead_previous} are direct consequences of the transitions of the protocol depicted in \Cref{fig:WBF}.
    
    If~$u \in B_t$ and~$v \in W_t$, then~$v$ hears a beep in round~$t$, and so~$v \in B^\nlead_{t+1}$ by the transitions of the protocol, which establishes \Cref{eq:WB_next}.
    
    If~$u \in F_t$ and~$v \in W_t$, then~$u \in B_{t-1}$ by \Cref{eq:F_previous} and~$v \in W_{t-1} \cup F_{t-1}$ by \Cref{eq:W_previous}.
    However, we cannot have~$v \in W_{t-1}$ (otherwise we would have~$v \in B_t$ by \Cref{eq:WB_next}), therefore~$v \in F_{t-1}$, which establishes \Cref{eq:WF_previous} and concludes the proof of \Cref{claim:transitions}.
\end{proof}

Next, we show that the evolution of the flow along a path~$\omega$ from one round to the next, only depends on which endpoints of~$\omega$ are beeping.

\begin{lemma} [Conservation of flow] \label{lemma:flow_conservation}
    Let~$\omega$ be a path with vertex sequence~$(v_1,\ldots,v_k)$, and let~$t > 0$. We have
    \begin{equation*}
        \nu_{t}(\omega) = \nu_{t-1}(\omega) + \bbOne\{v_1 \in B_t\} - \bbOne\{v_k \in B_t\}.
    \end{equation*}
\end{lemma}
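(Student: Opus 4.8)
The plan is to establish the identity one edge at a time and then telescope along $\omega$. Writing $e_i = (v_i, v_{i+1})$ for the $i$-th edge of $\omega$ (so that $\omega$ has $k-1$ edges), I would first reduce the lemma to the following per-edge statement: for every oriented edge $e = (u,v)$ and every $t > 0$,
\[
    \nu_t(e) - \nu_{t-1}(e) = \bbOne\{u \in B_t\} - \bbOne\{v \in B_t\}.
\]
Granting this, summing over $i = 1, \ldots, k-1$ makes the right-hand side telescope: the term $-\bbOne\{v_{i+1} \in B_t\}$ coming from $e_i$ cancels the term $+\bbOne\{v_{i+1} \in B_t\}$ coming from $e_{i+1}$, leaving exactly $\bbOne\{v_1 \in B_t\} - \bbOne\{v_k \in B_t\}$, which is the claimed expression. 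Thus all the content lies in the per-edge identity.

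Next I would prove the per-edge identity by a case analysis on the pair of states of $u$ and $v$ in round~$t$, each ranging over $\{W_t, B_t, F_t\}$ (nine cases, essentially halved by the antisymmetry $\nu_t(u,v) = -\nu_t(v,u)$). In each case the value $\nu_t(e)$ is read directly off \Cref{def:flow}. To evaluate $\nu_{t-1}(e)$ I would invoke the backward implications of \Cref{claim:transitions}: \Cref{eq:B_previous} forces a node in $B_t$ to have been in $W_{t-1}$, \Cref{eq:F_previous} forces a node in $F_t$ to have been in $B_{t-1}$, and \Cref{eq:W_previous} restricts a node in $W_t$ to $W_{t-1} \cup F_{t-1}$. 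Substituting the resulting round-$(t-1)$ states into the definition of flow yields $\nu_{t-1}(e)$, and in every case one checks that $\nu_t(e) - \nu_{t-1}(e)$ agrees with $\bbOne\{u \in B_t\} - \bbOne\{v \in B_t\}$.

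I expect the main obstacle to be the single case that does not close by mechanical substitution, namely when exactly one endpoint is waiting in round~$t$ while the other is frozen, say $u \in W_t$ and $v \in F_t$. Here $v \in F_t$ gives $v \in B_{t-1}$, while $u \in W_t$ a priori permits both $u \in W_{t-1}$ and $u \in F_{t-1}$. The first possibility would give $\nu_{t-1}(e) = -1$, hence $\nu_t(e) - \nu_{t-1}(e) = 1$, whereas the right-hand side is $0$; so this sub-case must be shown impossible. This is precisely where the combinatorial structure of the protocol is needed: a waiting node adjacent to a beeping node is forced to beep in the next round (\Cref{eq:WB_next}), equivalently \Cref{eq:WF_previous} applied to the pair directly yields $u \in F_{t-1}$, making the offending configuration unreachable and restoring $\nu_{t-1}(e) = 0$. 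This also explains why a generic per-vertex ``potential difference'' argument cannot succeed: the flow is not determined by a vertex potential alone, and one genuinely has to exclude the adjacent waiting/beeping patterns that the dynamics never produces.
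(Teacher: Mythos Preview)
Your argument is correct. The per-edge identity
\[
\nu_t(u,v) - \nu_{t-1}(u,v) = \bbOne\{u \in B_t\} - \bbOne\{v \in B_t\}
\]
does hold in all nine cases once \Cref{eq:WF_previous} is used to rule out the problematic sub-case $u \in W_{t-1}$, $v \in B_{t-1}$ arising from $u \in W_t$, $v \in F_t$ (and its mirror). The telescoping over the $k-1$ edges of $\omega$ then yields the lemma verbatim.

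This is a different route from the paper's proof. The paper does \emph{not} establish a per-edge identity; instead it peels off the maximal $B$-prefix and $B$-suffix of the vertex sequence at time $t$ (defining indices $m$ and $M$), handles the resulting end pieces $\omega_1,\omega_3$ directly to produce the boundary terms $\pm\bbOne\{v_1 \in B_t\}$ and $\mp\bbOne\{v_k \in B_t\}$, and then decomposes the middle piece $\omega_2$ (whose endpoints are not in $B_t$) into sub-paths matching one of six regular-expression patterns ($WB^kW$, $FB^kW$, $WB^kF$, $FB^kF$, $FW$, $WF$), checking $\nu_t = \nu_{t-1}$ on each. Your approach is more uniform and shorter: by pushing the analysis down to a single edge you avoid the block decomposition entirely, and the only non-mechanical step is exactly the invocation of \Cref{eq:WF_previous} you highlight. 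What the paper's approach buys is perhaps a more ``physical'' picture of beep waves as contiguous $B$-blocks moving along the path, but for the purpose of proving the lemma your edge-wise telescoping is cleaner.
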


\begin{proof}
    Let~$\omega$ be a path with vertex sequence~$(v_1,\ldots,v_k)$ and a round~$t \geq 0$.
    We consider the word~$w$ of length~$k$, written on the alphabet~$\{W,B,F\}$, such that the~$i^{\text{th}}$ letter of~$w$ corresponds to the state of~$v_i$ in round~$t$. Given a regular expression~$e$, we say that~$\omega$ {\em matches}~$e$ in round~$t$, and write~$\omega \sim_t e$, if~$w \in \mathcal{L}(e)$.
    Throughout the proof, we will use the same notation for sub-paths of~$\omega$.

    If~$\omega \sim_t B^k$, then~$\omega \sim_{t-1} W^{k}$ by \Cref{eq:B_previous}. Therefore, $\nu_{t}(\omega) = \nu_{t-1}(\omega) = 0$ and the statement holds.
    Otherwise, let~$m = \min \{i, v_i \notin B_t \}$, and~$M = \max \{i, v_i \notin B_t \}$.
    We consider the following three sub-paths of~$\omega$: $\omega_1 = (v_1,\ldots,v_m)$, $\omega_2 = (v_m,\ldots,v_M)$ and $\omega_3 = (v_M,\ldots,v_k)$.
    \begin{itemize}
        \item If $\omega_1 \sim_t B^{m-1}W$, then $\omega_1 \sim_{t-1} W^{m-1} \cdot (W+F)$ by \Cref{eq:B_previous,eq:W_previous}. In that case, $\nu_{t-1}(\omega_1) = 0$, and
        \begin{equation*}
            \nu_t(\omega_1) = \begin{cases}
                1 & \text{if } m \geq 2, \\
                0 & \text{otherwise}
            \end{cases} \quad = \bbOne\{v_1 \in B_t\}.
        \end{equation*}
        \item Else, $\omega_1 \sim_t B^{m-1}F$, and then, $\omega_1 \sim_{t-1} W^{m-1} B$ by \Cref{eq:B_previous,eq:F_previous}. In that case, $\nu_t(\omega_1) = 0$ and
        \begin{equation*}
            \nu_{t-1}(\omega_1) = \begin{cases}
                -1 & \text{if } m \geq 2, \\
                0 & \text{otherwise}
            \end{cases} \quad = -\bbOne\{v_1 \in B_t\}.
        \end{equation*}
    \end{itemize}
    In both cases, we have that
    \begin{equation} \label{eq:prefix_subpath}
        \nu_t(\omega_1) = \nu_{t-1}(\omega_1) + \bbOne\{v_1 \in B_t\}.
    \end{equation}
    By a symmetric reasoning, we obtain
    \begin{equation} \label{eq:suffix_subpath}
        \nu_t(\omega_3) = \nu_{t-1}(\omega_3) - \bbOne\{v_k \in B_t\}.
    \end{equation}
    Now, we will apply the same kind of argument to some well-chosen sub-paths of~$\omega_2$, until it is entirely covered.
    Specifically, for any sub-path~$z$ of~$\omega_2$,
    \begin{alignat*}{5}
        &\text{For $k \geq 0$ \qquad} &&z \sim_t WB^kW &&\impliesplus{eq:W_previous,eq:B_previous} &&z \sim_{t-1} (W+F)\cdot W^k \cdot(W+F) &&\implies \nu_t(z) = \nu_{t-1}(z) = 0, \\
        &\text{For $k \geq 1$} &&z \sim_t FB^kW &&\impliesplus{eq:W_previous,eq:B_previous,,eq:F_previous} &&z \sim_{t-1} B W^k \cdot(W+F) &&\implies \nu_t(z) = \nu_{t-1}(z) = +1, \\
        &\text{For $k \geq 1$} &&z \sim_t WB^kF &&\impliesplus{eq:W_previous,eq:B_previous,,eq:F_previous} &&z \sim_{t-1} (W+F) \cdot W^k B &&\implies \nu_t(z) = \nu_{t-1}(z) = -1, \\
        & \text{For $k \geq 0$} &&z \sim_t FB^kF &&\impliesplus{eq:F_previous,eq:B_previous} &&z \sim_{t-1} BW^kB &&\implies \nu_t(z) = \nu_{t-1}(z) = 0, \\
        & &&z \sim_t FW &&\impliesplus{eq:F_previous,eq:WF_previous} &&z \sim_{t-1} BF &&\implies \nu_t(z) = \nu_{t-1}(z) = 0, \\
        & &&z \sim_t WF &&\impliesplus{eq:F_previous,eq:WF_previous} &&z \sim_{t-1} FB &&\implies \nu_t(z) = \nu_{t-1}(z) = 0. 
    \end{alignat*}
    Since the endpoints of~$\omega_2$ ($v_m$ and~$v_M$) are not in~$B_t$,
    any sequence of~$k$ consecutive vertices in~$\omega_2$ that are all in~$B_t$ must be sandwiched between two vertices that are not in~$B_t$.
    Therefore, any such sequence is covered by one of the first 4 equations above.
    Moreover, the last 4 equations cover all remaining edges of~$\omega_2$.
    
    Therefore, $\omega_2$ can be divided into a sequence of sub-paths~$(z_1,\ldots,z_r)$ with~$r \geq 1$, such that for every~$i \leq r$, $\nu_t(z_i) = \nu_{t-1}(z_i)$. This implies that
    \begin{equation*}
        \nu_t(\omega_2) = \sum_{i=1}^r \nu_t(z_i) = \sum_{i=1}^r \nu_{t-1}(z_i) = \nu_{t-1}(\omega_2).
    \end{equation*}
    Together with \Cref{eq:prefix_subpath,eq:suffix_subpath} and given that~$\nu_t(\omega) = \nu_t(\omega_1) + \nu_t(\omega_2) + \nu_t(\omega_3)$, this concludes the proof of \Cref{lemma:flow_conservation}.
\end{proof}



The previous result allows us to compute the flow along a path as a function of the number of beeps at its endpoints, a property sometimes referred to as {\em Ohm’s law} in similar contexts.

\begin{corollary} [Ohm's law] \label{cor:flow_path}
    For any path~$\omega$ with vertex sequence~$(v_1,\ldots,v_k)$ and every~$t \geq 0$, $\nu_t(\omega) = \beepcount_t(v_1) - \beepcount_t(v_k)$.
\end{corollary}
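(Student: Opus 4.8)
The plan is to prove the identity by induction on~$t$, using \Cref{lemma:flow_conservation} to drive the inductive step. The statement has exactly the shape of a telescoping sum: the conservation lemma describes how~$\nu_t(\omega)$ changes from one round to the next in terms of the indicators that each endpoint beeps, while~$\beepcount_t(v_1) - \beepcount_t(v_k)$ is precisely the running count of those same increments. So the entire content reduces to setting up the base case correctly and then matching the two recursions term by term.

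For the base case~$t = 0$, I would invoke the initialization condition \Cref{eq:initialization_condition}, which guarantees~$V \subseteq W_0$. Since every node is Waiting in round~$0$, no endpoint of any edge of~$\omega$ lies in~$B_0$, so~$\nu_0(e) = 0$ for each such edge and hence~$\nu_0(\omega) = 0$. On the other side,~$v_1, v_k \in W_0$ means neither vertex has beeped by round~$0$, so~$\beepcount_0(v_1) = \beepcount_0(v_k) = 0$, and both sides agree.

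For the inductive step, assume~$\nu_{t-1}(\omega) = \beepcount_{t-1}(v_1) - \beepcount_{t-1}(v_k)$. Applying \Cref{lemma:flow_conservation} gives
\begin{equation*}
    \nu_t(\omega) = \nu_{t-1}(\omega) + \bbOne\{v_1 \in B_t\} - \bbOne\{v_k \in B_t\}.
\end{equation*}
On the count side, the definition of~$\beepcount$ yields~$\beepcount_t(v) = \beepcount_{t-1}(v) + \bbOne\{v \in B_t\}$ for any vertex~$v$. Substituting the inductive hypothesis for~$\nu_{t-1}(\omega)$ and regrouping the indicator terms with their respective counts closes the induction, since~$\beepcount_{t-1}(v_1) + \bbOne\{v_1 \in B_t\} = \beepcount_t(v_1)$ and likewise for~$v_k$.

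I do not expect any genuine obstacle here: once the conservation lemma is available, the corollary is essentially immediate. The only points requiring care are that the base case must rely on the initialization assumption~$V \subseteq W_0$ (rather than on the transition rules), and that the indexing convention of~$\beepcount$ (``up to round~$t$ included'') lines up with the round at which \Cref{lemma:flow_conservation} registers a new beep, so that the increments of~$\nu$ and of~$\beepcount$ remain synchronized throughout the induction.
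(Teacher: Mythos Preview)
Your proposal is correct and matches the paper's own proof essentially line for line: induction on~$t$, base case via \Cref{eq:initialization_condition} giving~$\nu_0(\omega)=0=\beepcount_0(v_1)-\beepcount_0(v_k)$, and the inductive step via \Cref{lemma:flow_conservation} combined with~$\beepcount_t(v)=\beepcount_{t-1}(v)+\bbOne\{v\in B_t\}$.
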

\begin{proof}
    Fix an arbitrary path~$\omega$ with vertex sequence~$(v_1,\ldots,v_k)$.
    We proceed by induction on~$t$.
    Since all nodes are in~$W_0$ by \Cref{eq:initialization_condition}, $\nu_0(\omega) = 0$, and~$\beepcount_0(v_1) = \beepcount_0(v_k) = 0$, so the statement holds for~$t=0$.
    Now, fix~$t > 0$ and assume that the statement holds for~$t-1$. We have
    \begin{align*}
        \nu_{t}(\omega) &= \nu_{t-1}(\omega) + \bbOne\{v_1 \in B_t\} - \bbOne\{v_k \in B_t\} & \text{(by \Cref{lemma:flow_conservation})} \\
        &= \beepcount_{t-1}(v_1) - \beepcount_{t-1}(v_k) + \bbOne\{v_1 \in B_t\} - \bbOne\{v_k \in B_t\} & \text{(by induction hypothesis)} \\
        &= \beepcount_{t}(v_1) - \beepcount_{t}(v_k),
    \end{align*}
    which concludes the induction and the proof of \Cref{cor:flow_path}.
\end{proof}


As we demonstrate in the next result, our version of Ohm's law implies that leaders with the largest number of beeps cannot be eliminated in the next round, which opens the way to establishing the correctness of our algorithm.

\begin{lemma} \label{lemma:no_leader_extinction}
    There is always at least one leader in the population.
\end{lemma}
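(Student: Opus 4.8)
The plan is to argue by contradiction: assume there is a round with no leaders, take the first such round $T$, and derive a contradiction by tracking the leader that has beeped the most up to that point. First I would pin down the only way a leader can vanish. Inspecting \Cref{fig:WBF}, the sole transition leaving the leader states $\{W^\lead, B^\lead, F^\lead\}$ is $W^\lead \to B^\nlead$, which fires exactly when a node in $W^\lead$ hears a beep; the transitions $B^\lead \to F^\lead$ and $F^\lead \to W^\lead$ keep a node in the leader camp. Hence if $T$ is the first leaderless round, then $B^\lead_{T-1}$ and $F^\lead_{T-1}$ must both be empty (otherwise a leader would survive into round $T$), every leader present at round $T-1$ lies in $W^\lead_{T-1}$, and each of them is eliminated, i.e.\ hears a beep at round $T-1$.

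The crucial quantitative input is \Cref{cor:flow_path}: if $u \in W^\lead_t \subseteq W_t$ is eliminated by a beeping neighbor $v \in B_t$, then the edge $(v,u)$ carries flow $+1$, so $\beepcount_t(v) = \beepcount_t(u) + 1$. The naive hope is that the leader with the largest beep count cannot be eliminated, since any eliminating neighbor beeps strictly more often. The obstacle is that this neighbor $v$ may be a \emph{non-leader}, so maximality among leaders alone yields no contradiction. To repair this, I would prove a backward-tracing lemma: any beep $w \in B_t$ with $\beepcount_t(w) = c$ traces back to a beeping leader with the same count. Concretely, while $w \in B^\nlead_t$, \Cref{eq:B_nlead_previous} provides a neighbor $w' \in B_{t-1}$, and since $w \in W_{t-1}$ by \Cref{eq:B_previous}, Ohm's law gives $\beepcount_{t-1}(w') = \beepcount_{t-1}(w) + 1 = c$. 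Iterating strictly decreases the round index, which stays positive because $B_0 = \emptyset$; the chain therefore terminates at some $x \in B^\lead_s$ with $\beepcount_s(x) = c$.

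With this lemma in hand I would run an extremal argument. Let $\master$ be the maximum of $\beepcount_s(x)$ over all beeping-leader events $x \in B^\lead_s$ with $s \le T-1$; the eliminating beep at round $T-1$ together with the tracing lemma shows this set is nonempty, so $\master$ is well defined, and since $B^\lead_{T-1} = \emptyset$ it is attained at some round $s^\star \le T-2$ by a leader $x^\star$. Because there is no leader at round $T$, $x^\star$ must leave the leader states before round $T$, which can only happen through $W^\lead \to B^\nlead$: at some round $s'-1 \in [s^\star, T-1]$, $x^\star \in W^\lead$ hears a beep from some neighbor $v'$. By Ohm's law and the monotonicity of the beep count, $\beepcount_{s'-1}(v') = \beepcount_{s'-1}(x^\star) + 1 \ge \master + 1$. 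Tracing this beep back to a beeping leader then produces a beeping-leader event at a round $\le T-1$ with count $\ge \master + 1$, contradicting the maximality of $\master$.

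I expect the main obstacle to be precisely the point that forces the tracing lemma: the node that eliminates a leader need not itself be a leader, so one cannot argue purely locally about ``the current leader with the most beeps''. The right fix is to follow the beep wave back to the leader that generated it, and to take the maximum over \emph{all} beeping-leader events rather than over the leaders alive at a single round; the flow machinery of \Cref{sec:flow} is exactly what makes every step of this tracing exact.
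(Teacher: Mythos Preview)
Your argument is correct. Both the tracing lemma and the extremal argument go through as you describe; the only point worth tightening is that after the event $x^\star \in B^\lead_{s^\star}$ the transitions force $x^\star \in F^\lead_{s^\star+1}$ and $x^\star \in W^\lead_{s^\star+2}$, so the elimination round $s'$ actually satisfies $s'-1 \ge s^\star+2$, but you only use $s'-1 \ge s^\star$, which is fine.

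Your route is genuinely different from the paper's. The paper argues by a forward induction on $t$, maintaining the stronger invariant that the set $\master_t := \argmax_{w} \beepcount_t(w) \cap \{\text{leaders in round }t\}$ is nonempty. Its key sublemma is a one-step version of your tracing: if $u \in B^\nlead_t$ then some neighbor $v$ satisfies $\beepcount_{t-1}(v) > \beepcount_{t-1}(u)$. From this, a node in $\master_{t-1}$ either stays maximal and a leader at round $t$, or is overtaken only by nodes that themselves had maximal count at $t-1$ and beep at $t$, hence are leaders. You instead iterate that one-step claim into a full backward trace to a $B^\lead$ event with the same beep count, and then run a global extremal argument over all beeping-leader events up to the hypothetical first leaderless round. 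The paper's approach is shorter and yields the stronger per-round invariant as a byproduct; your tracing lemma makes the ``every beep originates from a leader with that exact count'' picture explicit and is a pleasant standalone fact, at the cost of a slightly longer overall argument. Both proofs use \Cref{cor:flow_path} in exactly the same way at the single-edge level.
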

\begin{proof}
    We start by showing that a leader cannot be eliminated unless it has beeped less than at least one neighbor.
    
    \begin{claim} \label{lemma:elimination_condition}
        If~$u \in B_t^\nlead$ with~$t \geq 1$, then there exists~$v \in N_1(u)$ such that~$\beepcount_{t-1}(u) < \beepcount_{t-1}(v)$.
    \end{claim}
    \begin{proof}
        By \Cref{eq:B_nlead_previous}, there exists~$v \in N_1(u) \cap B_{t-1}$, and by \Cref{eq:B_previous}, $u \in W_{t-1}$. Therefore, $\nu_{t-1}(u,v) = -1$.
        Moreover, by \Cref{cor:flow_path}, 
        $\nu_{t-1}(u,v) = \beepcount_{t-1}(u) - \beepcount_{t-1}(v)$, and so    
        $\beepcount_{t-1}(u) = \beepcount_{t-1}(v) - 1$, which concludes the proof of \Cref{lemma:elimination_condition}.
    \end{proof}

    Now, let~$\master_t$ be the set of nodes with the highest number of beeps up to round~$t$, and that are also leaders in round~$t$:
    \begin{equation*}
        \master_t := \argmax_{w \in V} \pa{ \beepcount_t(w) } \medcap \pa{W_t^\lead \cup B_t^\lead \cup F_t^\lead}.
    \end{equation*}
    We will show by induction that for every~$t \in \bbN$, $\master_t \neq \emptyset$.
    By \Cref{eq:initialization_condition}, no node beeps in round~$0$, so~$\master_0 = W_0^\lead \neq \emptyset$. Now, fix~$t > 0$ and assume that~$\master_{t-1} \neq \emptyset$.
    Let~$u \in \master_{t-1}$, and let~$A = \{v \in V, \quad \beepcount_{t}(v) > \beepcount_{t}(u)\}$.

    Consider the case that $A = \emptyset$.
    Note that~$u$ is still a leader in round~$t$; otherwise, since~$u$ is a leader in round~$t-1$, we would have~$u \in W^\lead_{t-1} \cap B^\nlead_t$, and \Cref{lemma:elimination_condition} gives a contradiction with the maximality of~$\beepcount_{t-1}(u)$. Therefore, $A = \emptyset$ implies that $u \in \master_t$, which concludes the induction. 

    We consider the case where~$A \neq \emptyset$.
    For every~$v \in A$, we have
    \begin{equation*}
        \beepcount_{t}(u) < \beepcount_t(v) \leq \beepcount_{t-1}(v) + 1 \leq \beepcount_{t-1}(u) + 1 \leq \beepcount_{t}(u) + 1,
    \end{equation*}
    and hence, $\beepcount_{t-1}(v) = \beepcount_{t-1}(u) = \beepcount_{t}(u)$ and $\beepcount_t(v) = \beepcount_{t-1}(v) + 1$. Therefore,
    \begin{equation} \label{eq:A_property}
        A \subseteq \argmax_{w \in V} \pa{ \beepcount_{t-1}(w) } \medcap B_t.
    \end{equation}
    Let~$v \in A$. If~$v$ is not a leader in round~$t$, then \Cref{eq:A_property} implies~$v \in B^\nlead_t$; by \Cref{lemma:elimination_condition}, this implies that~$v \notin \argmax_{w \in V} \pa{ \beepcount_{t-1}(w) }$, which contradicts \Cref{eq:A_property}. Therefore, $v$ has to be a leader in round~$t$. This implies that~$A \subseteq \master_t \neq \emptyset$, which concludes the induction and the proof of \Cref{lemma:no_leader_extinction}.
\end{proof}

To conclude this section, we prove two additional results that will be useful when computing upper bounds on the convergence time of our protocol. \Cref{lemma:max_beepcount_difference} is a straightforward consequence of Ohm's law and the maximum flow on a path, while \Cref{lemma:traveling_beep} establishes the propagation of beeps across the network.

\begin{lemma} \label{lemma:max_beepcount_difference}
    For every~$u,v \in V$, $|\beepcount_t(u) - \beepcount_t(v)| \leq \dis(u,v)$.
\end{lemma}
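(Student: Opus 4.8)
The plan is to derive the bound directly from Ohm's law (\Cref{cor:flow_path}) together with the elementary bound on the flow along a path given by \Cref{eq:max_flow}. The whole statement is essentially a one-line combination of these two facts, so I do not expect any genuine difficulty; the only point that requires care is the choice of path.

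First I would fix arbitrary $u,v \in V$ and a round $t \geq 0$, and let $\omega$ be a \emph{shortest} path in $G$ from $u$ to $v$, with vertex sequence $(v_1,\ldots,v_{k+1})$ where $v_1 = u$, $v_{k+1} = v$, and $k = \dis(u,v)$ is the number of oriented edges of $\omega$. Taking a shortest path is the one substantive choice here: it guarantees that the number of edges of $\omega$ equals $\dis(u,v)$ exactly, rather than merely being bounded by some larger quantity.

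Next I would apply \Cref{cor:flow_path} to this path, obtaining $\nu_t(\omega) = \beepcount_t(u) - \beepcount_t(v)$. Since $\omega$ consists of exactly $k = \dis(u,v)$ edges, \Cref{eq:max_flow} gives $|\nu_t(\omega)| \leq \dis(u,v)$. Combining the two yields $|\beepcount_t(u) - \beepcount_t(v)| = |\nu_t(\omega)| \leq \dis(u,v)$, and since $u,v,t$ were arbitrary this completes the proof.

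The ``main obstacle'', such as it is, is purely bookkeeping: \Cref{cor:flow_path} is phrased in terms of the vertex sequence while \Cref{eq:max_flow} counts edges, so one must track the off-by-one and confirm that a distance-$\dis(u,v)$ path has precisely $\dis(u,v)$ edges and hence flow bounded in absolute value by $\dis(u,v)$. Beyond this, the result is an immediate corollary of the deterministic flow machinery already established.
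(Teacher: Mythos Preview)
Your proposal is correct and matches the paper's own proof essentially verbatim: pick a shortest path~$\omega$ from~$u$ to~$v$, apply \Cref{cor:flow_path} to get $\nu_t(\omega) = \beepcount_t(u) - \beepcount_t(v)$, and bound $|\nu_t(\omega)| \leq \dis(u,v)$ via \Cref{eq:max_flow}.
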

\begin{proof}
    Let~$\omega$ be a shortest path of length~$\dis(u,v)$ between~$u$ and~$v$.
    By \Cref{eq:max_flow}, $|\nu_t(\omega)| \leq \dis(u,v)$, and by \Cref{cor:flow_path}, $\nu_t(\omega) = \beepcount_t(u) - \beepcount_t(v)$, which concludes the proof of \Cref{lemma:max_beepcount_difference}.
\end{proof}

\begin{lemma} \label{lemma:traveling_beep}
    If~$\beepcount_t(u) > \beepcount_t(v)$, then there exists a round~$s \leq t+\dis(u,v)$ such that~$v \in B_s$.
\end{lemma}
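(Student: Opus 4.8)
The plan is to induct on the distance $d = \dis(u,v)$, using a single ``one-step propagation'' fact as the engine. The propagation fact I would isolate first is: \emph{if some neighbor $w$ of $v$ beeps in round $s$ (i.e.\ $w \in B_s$), then $v$ itself beeps in one of the rounds $s-1$, $s$, or $s+1$.} This follows from a case split on the state of $v$ in round $s$: if $v \in W_s$ then $v$ hears $w$'s beep and so $v \in B_{s+1}$ by \Cref{eq:WB_next}; if $v \in B_s$ there is nothing to prove; and if $v \in F_s$ then \Cref{eq:F_previous} forces $v \in B_{s-1}$ (here $s \ge 1$, since no node beeps in round $0$ by \Cref{eq:initialization_condition}). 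The point of this fact is that it absorbs the annoyance caused by \emph{freezing}: even though a frozen $v$ ignores its neighbor's beep, the very fact that it is frozen certifies that it beeped the round before.

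With this in hand the induction becomes clean. For the inductive step I would fix a shortest path $u = w_0, w_1, \ldots, w_d = v$ and focus on the penultimate vertex $w := w_{d-1}$, which is a neighbor of $v$ with $\dis(u,w) = d-1$. The goal is to exhibit a round $s^* \le t + (d-1)$ in which $w$ beeps; the propagation fact then yields a beep of $v$ in some round $\le s^* + 1 \le t + d$, as required. To find such an $s^*$ I would split on how $\beepcount_t(w)$ compares to $\beepcount_t(u)$. If $\beepcount_t(w) < \beepcount_t(u)$, then since $\dis(u,w) = d-1$ the induction hypothesis directly supplies a round $s^* \le t + (d-1)$ with $w \in B_{s^*}$. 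If instead $\beepcount_t(w) \ge \beepcount_t(u)$, then $\beepcount_t(w) > \beepcount_t(v)$ for the adjacent pair $w,v$, so \Cref{lemma:max_beepcount_difference} forces $\beepcount_t(w) - \beepcount_t(v) = 1$, i.e.\ $\nu_t(w,v) = 1$ by \Cref{cor:flow_path}; by the definition of flow this means $w \in B_t$ and $v \in W_t$, so I may simply take $s^* = t$. The base case $d=1$ fits the same template with $w = u$: from $\beepcount_t(u) > \beepcount_t(v) \ge 0$ we get $\beepcount_t(u) \ge 1$, so $u$ beeps in some round $s^* \le t$, and propagation finishes.

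The only genuinely delicate point is the interaction between the induction and freezing, which is exactly what the propagation fact resolves: a naive ``$w$ beeps, hence $v$ beeps one round later'' step is false when $v$ happens to be frozen, but the $F$-case of the propagation fact recovers a beep of $v$ \emph{one round earlier} instead, still inside the budget $t+d$. The secondary subtlety is that the induction hypothesis requires $w$ to have strictly fewer beeps than $u$; the case analysis above shows that whenever this fails, $w$ must already be beeping in round $t$ with $v$ waiting, so no recursion is needed and $s^*=t$ works. Everything else is bookkeeping on the time indices, using only \Cref{lemma:max_beepcount_difference}, \Cref{cor:flow_path}, and the basic transitions of \Cref{claim:transitions}.
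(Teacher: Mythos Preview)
Your argument is correct, and it takes a genuinely different route from the paper's own proof. The paper establishes, by induction on the length~$k$, a property~$P_k$ stating that \emph{any} path (not just a shortest one) with strictly positive flow sees its endpoint beep within~$k$ additional rounds; the inductive engine there is \Cref{lemma:flow_conservation}, used to advance time by one round while passing to a strictly shorter sub-path. You instead induct on the distance~$d$ and advance in \emph{space} toward~$v$ via your one-step propagation fact, invoking flow only indirectly through \Cref{cor:flow_path} and \Cref{lemma:max_beepcount_difference} in Case~2. Your approach is arguably more elementary---the propagation fact is immediate from \Cref{claim:transitions} and neatly absorbs the freezing issue---at the cost of not isolating the paper's intermediate statement~$P_k$ about arbitrary paths with positive flow (which, however, is not reused elsewhere). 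Both arguments yield the same $+1$-per-step time budget, so neither gives a sharper bound.
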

\begin{proof}
    We begin by showing (by induction on~$k$) that the following property holds for every~$k \geq 1$:
    \begin{multline*}
        P_k := ``\forall \omega \text{ with vertex sequence } (v_0,\ldots,v_k), \forall t \in \bbN, \quad
        \nu_t(\omega) > 0 \implies \exists s \leq t+k, ~ v_k \in B_s. "
    \end{multline*}
    Let~$t \in \bbN$. If~$\omega$ consists in a single edge~$(v_0,v_1)$ and~$\nu_t(\omega) > 0$, then by \Cref{def:flow}, $v_0 \in B_t$ and~$v_1 \in W_t$. By~\Cref{eq:WB_next}, this implies that~$v_1 \in B_{t+1}$, and hence~$P_1$ holds.

    Now assume that~$P_k$ holds for some given~$k\geq 1$.
    Let~$t \in \bbN$ and~$\omega = (v_0,\ldots,v_{k+1})$ s.t.~$\nu_t(\omega) > 0$.
    If~$v_{k+1} \in B_{t+1}$, then the property holds, so we will restrict attention to the case that~$v_{k+1} \notin B_{t+1}$.
    Let
    \begin{equation*}
        i := \max \{j \leq k, \quad \nu_t(v_j,\ldots,v_{k+1}) > 0 \}.
    \end{equation*}
    Since~$\nu_t(\omega) > 0$, $i$ is well-defined. Since the flow along the path may only vary by 1 unit with each edge, and by the maximality of~$i$, we have
    \begin{equation*}
        \nu_t\pa{v_{i+1},\ldots,v_{k+1}} = 0,
    \end{equation*}
    and moreover,~$v_i \in B_t$ and~$v_{i+1} \in W_t$. By \Cref{eq:WB_next}, this implies~$v_{i+1} \in B_{t+1}$. Therefore, since~$v_{k+1} \notin B_{t+1}$ and by \Cref{lemma:flow_conservation},
    \begin{equation*}
        \nu_{t+1}(v_{i+1},\ldots,v_{k+1}) = \nu_t(v_{i+1},\ldots,v_{k+1}) + 1 = 1,
    \end{equation*}
    and we can conclude that~$P_{k+1}$ holds, by applying~$P_k$ to~$(v_{i+1},\ldots,v_{k+1})$ in round~$t+1$. By induction, this implies that~$P_k$ holds for every~$k \geq 1$.
    
    \Cref{lemma:traveling_beep} follows by applying~$P_k$ to a shortest path~$\omega$ of length~$\dis(u,v)$ between~$u$ and~$v$, given that~$\nu_t(\omega) = \beepcount_t(u) - \beepcount_t(v) > 0$.
\end{proof}

\section{Probabilistic Analysis} \label{sec:probabilistic_analysis}

\subsection{A Preliminary Result}

The proof of the following theorem uses classical arguments, and is deferred to \Cref{appendix:proba}.


\begin{theorem} \label{thm:visits_anti_concentration} Let~$(X_t)_{t \geq 1}$ be an aperiodic, irreducible Markov chain with finite state space~$\mathcal{X}$ and admitting a stationary distribution $\pi$. For~$x \in \mathcal{X}$ and~$t \in \bbN$, let~$N_t(x)$ be the number of visits to state~$x$ up to round~$t$ (included):
    \begin{equation*}
        N_t(x) := \sum_{s=1}^t \bbOne \{ X_s = x \}.
    \end{equation*}
    If~$|\mathcal{X}|>1$ and~$X_1 \sim \pi$, then for every~$x \in \mathcal{X}$, and every~$t$ large enough, and every constant $c>0$, there exists $\varepsilon = \varepsilon(c)$ such that $0<\varepsilon <1$ and
    \begin{equation*}
       \sup_{m \in \mathbb{N}} \bbP \pa{|N_t(x) - m| \leq c\sqrt{\Var(N_t(x))}}\leq 1-\varepsilon.
    \end{equation*}
\end{theorem}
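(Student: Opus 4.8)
The plan is to reduce the statement to the classical central limit theorem for additive functionals of an ergodic Markov chain, combined with the elementary anti-concentration of the Gaussian. Since $\mathcal{X}$ is finite and the chain is irreducible and aperiodic, it is ergodic with a unique stationary distribution $\pi$, and irreducibility together with $|\mathcal{X}|>1$ forces $0<\pi(x)<1$ for every $x$. Write $\mu := \pi(x)$ and $\sigma_t := \sqrt{\Var(N_t(x))}$; stationarity ($X_1\sim\pi$) gives $\bbE[N_t(x)] = t\mu$ exactly. The key observation is that the event $\{|N_t(x)-m|\le c\sigma_t\}$ is exactly $\{|Z_t - y_m|\le c\}$, where $Z_t := (N_t(x)-t\mu)/\sigma_t$ and $y_m := (m-t\mu)/\sigma_t$. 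Hence it suffices to prove that $Z_t$ converges in distribution to a standard Gaussian, uniformly enough to control $\sup_m \bbP(|Z_t - y_m|\le c)$.

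The central ingredient I would invoke is that $(N_t(x)-t\mu)/\sqrt{t}$ converges in distribution to $\mathcal{N}(0,\sigma^2)$ for some asymptotic variance $\sigma^2\ge 0$, together with $\Var(N_t(x))/t \to \sigma^2$; both are standard for finite ergodic chains (via the regenerative decomposition of the trajectory at successive visits to $x$, or via the Poisson equation and the martingale CLT). The single point that genuinely uses the hypotheses is that $\sigma^2>0$, and I would establish it through the renewal structure. Letting $\tau$ be the return time to $x$, the process $N_t(x)$ is a (delayed) renewal counting process and $\sigma^2 = \pi(x)^3\,\Var_x(\tau)$, so it suffices that $\tau$ is not almost surely constant. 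If $\tau$ were almost surely equal to a constant $k$, then $P^n(x,x)>0$ only for multiples of $k$, forcing the period of $x$ to be $k$; aperiodicity gives $k=1$, i.e.\ $P(x,x)=1$, making $x$ absorbing and contradicting irreducibility on $|\mathcal{X}|>1$ states. Hence $\Var_x(\tau)>0$, so $\sigma^2>0$, whence $\sigma_t\sim\sigma\sqrt{t}\to\infty$ and, by Slutsky, $Z_t \Rightarrow \mathcal{N}(0,1)$.

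Two elementary facts then close the argument. First, the Gaussian is anti-concentrated: every interval of width $2c$ has mass at most $\sup_{a\in\bbR}\big(\Phi(a+2c)-\Phi(a)\big) = 2\Phi(c)-1<1$, the supremum being attained at the centered interval $[-c,c]$. Second, since the limit CDF $\Phi$ is continuous, Polya's theorem upgrades $Z_t \Rightarrow \mathcal{N}(0,1)$ to uniform convergence of distribution functions, $\sup_z |F_t(z)-\Phi(z)|\to 0$, where $F_t$ is the CDF of $Z_t$. Combining these, for every $m$ and every small $\eta>0$,
\[
\bbP(|Z_t-y_m|\le c) \le F_t(y_m+c) - F_t(y_m-c-\eta) \le \big(\Phi(y_m+c)-\Phi(y_m-c-\eta)\big) + 2\sup_z|F_t(z)-\Phi(z)|,
\]
and taking the supremum over $m$ bounds the first term by $2\Phi(c+\tfrac{\eta}{2})-1$. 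Taking $t$ large and $\eta$ small makes the right-hand side at most $2\Phi(c)-1+\delta$ for any prescribed $\delta>0$; choosing $\delta$ with $2\Phi(c)-1+\delta<1$ yields the claim with $\varepsilon=\varepsilon(c):=2(1-\Phi(c))-\delta>0$. The slack $\eta$ absorbs the atoms of the integer-valued $N_t(x)$, which are harmless.

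Conceptually, the main obstacle is the positivity $\sigma^2>0$: without it the normalized variable degenerates and the statement is simply false, and this is exactly where aperiodicity and $|\mathcal{X}|>1$ are used. The main technical point is passing from $\sup_m$ over all centering values to a single bound, which needs the uniform (Polya) rather than merely pointwise convergence of CDFs, together with the routine fact that the point masses of $N_t(x)$ vanish as $t\to\infty$.
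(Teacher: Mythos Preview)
Your proof is correct and follows the same high-level strategy as the paper: establish a (uniform) CLT for the centered and normalized visit count, then invoke the trivial anti-concentration of the Gaussian on intervals of fixed width. The technical execution differs, however. The paper reaches uniformity via a quantitative Berry--Esseen bound for $\alpha$-mixing stationary sequences (Tikhomirov), obtaining an explicit error $O(\log^2 t/\sqrt t)$ and then bounding the Gaussian mass of the width-$2c$ interval by $\mathrm{erf}(c/\sqrt 2)$, which is exactly your $2\Phi(c)-1$. You instead use the standard Markov-chain CLT (via regeneration) together with Polya's theorem to upgrade weak convergence to uniform convergence of CDFs, and your $\eta$-slack neatly handles the atoms. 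Your route is more elementary---it avoids the heavy mixing Berry--Esseen machinery---at the cost of losing the explicit rate, which is not needed for the stated theorem anyway. One genuine advantage of your write-up is that you explicitly verify $\sigma^2>0$ from aperiodicity and $|\mathcal X|>1$ (via non-degeneracy of the return time); the paper's proof applies the Berry--Esseen theorem without checking that $\Var(N_t(x))\to\infty$, so your argument is arguably more complete on this point.
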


%

\subsection{\texorpdfstring{Proof of \Cref{thm:main}}{Proof of Theorem 2}} \label{sec:main_proof}

In this section, we consider Markov chains on the state space~$\mathcal{X} := \{W,B,F\}$ with transition matrix~$P$:
\begin{equation} \label{eq:transition_matrix}
\vcenter{\hbox{\includegraphics[width=4cm,height=4cm]{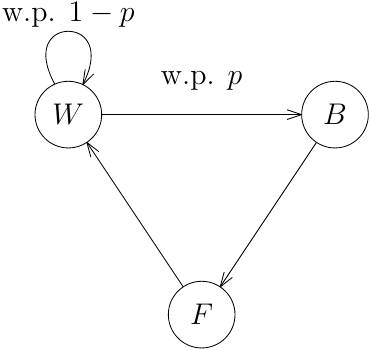}}}
\qquad\qquad
P := 
    \begin{pmatrix}
    1-p & p & 0\\
    0 & 0 & 1\\
    1 & 0 & 0
    \end{pmatrix}
\end{equation}


More precisely, we consider the product~$(\mathbf{X}_t)_{t \in \mathbb{N}}$ of~$n$ i.i.d. Markov chains~$\{(X^{(u)}_t)_{t\in \bbN}, u \in V\}$ with transition matrix~$P$, and such that~$X^{(u)}_1  = W$ for every~$u \in V$.
It is easy to verify that each $(X_t^{(u)})_{t \in \bbN}$ has stationary distribution 
\begin{equation}
    \pi := (\pi_W, \pi_B, \pi_F) = \pa{\tfrac{1}{2p+1},\tfrac{p}{2p+1},\tfrac{p}{2p+1}}.
    \label{eq:stationary_distribution}
\end{equation}

We write~$N^{(u)}_t$ to denote the number of visits to state~$B$ of~$(X^{(u)}_t)_{t\in \bbN}$ up to round~$t$ (included).
Finally, we define
\begin{equation}
    \sigma_{u,v} = \inf \left\{ t \in \bbN, \quad |N^{(u)}_t - N^{(v)}_t| > D \right\},
    \label{eq:def_sigma}
\end{equation}
where~$D$ (the diameter of the graph) may depend on~$n$.


The following lemma gives an anti-concentration result for the random variables $N_t^{(u)}$, derived from \Cref{thm:visits_anti_concentration}.


\begin{lemma} \label{lem:anti_concentration_Nt}
    There exists a constant $\varepsilon = \varepsilon(p)>0$ such that, for any $u \in V$ and for any large enough~$t$,
    \begin{equation*}
        \sup_{m \in \mathbb{N}}\Prob{|N_t^{(u)}-m| \leq \sqrt{t}} \leq 1-\varepsilon.
    \end{equation*}
\end{lemma}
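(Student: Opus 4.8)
The plan is to deduce the lemma from \Cref{thm:visits_anti_concentration} applied to a single copy of the chain with transition matrix~$P$ and target state~$x = B$. Since the chains~$(X^{(u)}_t)_t$ are i.i.d., the law of~$N_t^{(u)}$ does not depend on~$u$, so it suffices to treat one fixed~$u$ and the resulting bound is automatically uniform in~$u$. The chain on~$\{W,B,F\}$ is irreducible and aperiodic (it has a self-loop at~$W$) with stationary distribution~$\pi$ from \Cref{eq:stationary_distribution}, so \Cref{thm:visits_anti_concentration} is applicable — but two gaps must be bridged: the theorem assumes the stationary start~$X_1 \sim \pi$, whereas here~$X_1 = W$ deterministically, and its window is~$c\sqrt{\Var(N_t)}$ rather than~$\sqrt t$.

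I would first handle the window. Writing~$N_t^\pi$ for the visit count of the stationary chain, the Markov-chain central limit theorem --- equivalently, the renewal decomposition into i.i.d. excursions away from~$W$, in each of which a~$B$ is recorded with probability~$p$ --- gives~$\Var(N_t^\pi) = \gamma^2 t\,(1+o(1))$ with an asymptotic variance~$\gamma^2 = \gamma^2(p) > 0$; positivity holds because the per-excursion beep count is a non-degenerate Bernoulli variable, so~$\bbOne\{\,\cdot = B\}$ is not a coboundary and the count genuinely fluctuates. Consequently, for a suitable constant~$c = c(p)$ and all large~$t$ one has~$\sqrt t + K \leq c\sqrt{\Var(N_t^\pi)}$ for any fixed constant~$K$.

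To transfer from the stationary start to the start at~$W$, I would couple the chain~$(X_s)$ with~$X_1 = W$ to a stationary chain~$(\tilde X_s)$, $\tilde X_1 \sim \pi$, so that the two coalesce at a time~$\tau$ with geometric tails~$\bbP(\tau > k) \leq C\rho^k$ (such a coupling exists since the chain is finite, irreducible and aperiodic). As the trajectories agree after~$\tau$, their~$B$-counts satisfy~$|N_t^{(u)} - N_t^\pi| \leq \tau$. Then, for every~$m$, splitting on whether this gap exceeds a fixed constant~$K = K(p)$ yields
\begin{equation*}
    \Prob{|N_t^{(u)} - m| \leq \sqrt t} \;\leq\; \Prob{|N_t^\pi - m| \leq \sqrt t + K} + \bbP(\tau > K).
\end{equation*}
By the window bound and \Cref{thm:visits_anti_concentration}, the first term is at most~$1 - \varepsilon(c)$ uniformly in~$m$ for large~$t$; choosing~$K$ large enough that~$C\rho^K \leq \varepsilon(c)/2$ gives the claim with~$\varepsilon := \varepsilon(c)/2 = \varepsilon(p)$.

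I expect the initial-condition transfer to be the crux. A naive decomposition of~$\pi$ over starting states is useless here, since conditioning on~$X_1 = W$ keeps only the~$\pi_W$-fraction of the stationary mass and the resulting bound~$(1-\varepsilon)/\pi_W$ can exceed~$1$; the coupling is precisely what makes~$N_t^{(u)}$ a bounded perturbation of the well-spread stationary count~$N_t^\pi$. The only other point requiring care is the strict positivity of the asymptotic variance~$\gamma^2$, which I would secure through the explicit excursion structure of the chain rather than by an abstract non-degeneracy argument.
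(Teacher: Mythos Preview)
Your proposal is correct and follows the same high-level route as the paper --- both reduce to \Cref{thm:visits_anti_concentration} after establishing that $\Var(N_t)$ is of order~$t$ --- but the execution differs in two places. For the variance bound, you invoke the Markov-chain CLT (equivalently, the renewal-reward asymptotic variance) and argue $\gamma^2 > 0$ from the non-degeneracy of the per-excursion beep count; the paper instead computes explicitly, lower-bounding $\bbE|N_t^{(u)} - \bbE N_t^{(u)}|$ via the tail-sum formula, rewriting each summand as a binomial tail through the geometric--binomial duality of \Cref{lem:sum_geometric}, applying Berry--Esseen to the binomial, and then using Jensen to obtain $\Var(N_t^{(u)}) \geq \tfrac{\delta^2}{4}\, t$. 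Your argument is cleaner and more conceptual; the paper's is more self-contained. On the initial condition, you are in fact more careful than the paper: here the paper simply applies \Cref{thm:visits_anti_concentration} to $N_t^{(u)}$ without commenting on the deterministic start at~$W$ versus the stationary start assumed there, whereas you handle it explicitly by coupling. The paper deploys precisely this coupling in the proof of the next lemma (\Cref{claim:cant_outrun}), where it moreover obtains the sharper deterministic bound $|N_t^{(u)} - \tilde N_t| \leq 1$ by observing that on the three-state cycle two copies must meet before one can lap the other; you could borrow that in place of your geometric-tail argument and avoid the $\varepsilon/2$ loss.
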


\begin{proof}
    Fix~$u \in V$, and let $\tau$ be a random variable representing the first return of~$X_t^{(u)}$ to state~$B$, when starting in state~$B$.
    From the definition of the Markov Chain $(X_t^{(u)})_t$, we have that $\tau \sim 2 + \mathrm{Geom}(p)$. Then, let $\tau_1, \tau_2,\dots$ be i.i.d. copies of $\tau$. We have that
    \begin{equation} \label{eq:alternative_def}
        N_t^{(u)} = \min\{ k \geq 0 : \tau_1 + \dots + \tau_{k+1} > t\}.
    \end{equation}
    Therefore, for the tail-sum formula for the expectation and any large enough $t$, since $\Expc{N_t^{(u)}} = \pi_B \, t = \frac{p}{2p+1}t$ from \cref{eq:stationary_distribution}, it holds
\begin{align*}
    \Expc{|N_t^{(u)}-\Expc{N_t^{(u)}}|} &= \sum_{k=1}^{\infty} \Prob{|N_t^{(u)}-\Expc{N_t^{(u)}}| \geq k}
    \\ & \geq \sum_{k=1}^{\infty} \Prob{N_t^{(u)} \geq \frac{p}{2p+1}t + k} 
    \\ & = \sum_{k=1}^{\infty} \Prob{\tau_1+ \dots + \tau_{pt/(2p+1) + k} \leq t}. 
    & \text{(by \Cref{eq:alternative_def})}
\end{align*}
We can write~$\tau_i = 2+W_i$ where~$W_1,W_2, \dots$ are i.i.d. $\mathrm{Geom}(p)$ random variables. With this notation,
\begin{equation*}
    \sum_{i=1}^{\frac{pt}{2p+1}+k} \tau_i = \frac{2pt}{2p+1}+2k+ \sum_{i=1}^{\frac{pt}{2p+1}+k} W_i,
\end{equation*}
and hence,
\begin{equation*}
    \Prob{\sum_{i=1}^{\frac{pt}{2p+1}+k} \tau_i \leq t} = \Prob{\sum_{i=1}^{\frac{pt}{2p+1}+k}W_i \leq t- 2\frac{pt}{2p+1} - 2k}.
\end{equation*}
It holds that, from \cref{lem:sum_geometric}, if $Z = \mathrm{Bin}( t- 2\frac{pt}{2p+1} - 2k, p)$, then there exists a constant $\delta = \delta(p)>0$ such that 
\begin{align*}
    \Prob{\sum_{i=1}^{\frac{pt}{2p+1}+k}W_i \leq t- 2\frac{pt}{2p+1} - 2k} &= \Prob{Z \geq \frac{pt}{2p+1}+k}
    \\ & = \Prob{Z \geq \Expc{Z} + k(2p+1)} \\ & \geq \frac{1}{2},
\end{align*}
for any $k \leq \delta \sqrt{t}$, from a standard application  of the Berry-Esseen
theorem (see for instance the proof of \cref{thm:visits_anti_concentration}).
Hence, we have that
\[\Expc{|N_t^{(u)} - \Expc{N_t^{(u)}}|} \geq \frac{\delta}{2} \sqrt{t} \]
and hence, from Jensen's Inequality
\[\mathrm{Var}(N_t^{(u)}) \geq \left(\Expc{|N_t^{(u)} - \Expc{N_t^{(u)}}|}\right)^2 \geq \frac{\delta^2}{4} t.\]

Applying \cref{thm:visits_anti_concentration} with $c = 2/\delta$ and  from the previous inequality, we have that there exists $\varepsilon = \varepsilon(p)$ such that
\begin{equation*}
\sup_{m \in \mathbb{N}} \Prob{|N_t^{(u)}- m| \leq \sqrt{t}} \leq 
\sup_{m \in \mathbb{N}} \Prob{|N_t^{(u)}- m| \leq \frac{2}{\delta}\sqrt{\Var(N_t^{(u)})}} \leq 1-\varepsilon,
\end{equation*}
which concludes the proof.
\end{proof}

The previous result can be used to show that for any~$d > 0$, two independent sequences~$(N^{(u)}_t)_{t \in\bbN}$ and~$(N^{(v)}_t)_{t \in\bbN}$ differ by at least~$d$ after only~$d^2$ rounds with constant probability.

\begin{lemma} \label{lemma:one_step_deviation}
There exists a constant $\varepsilon=\varepsilon(p)>0$ such that, for every initial configuration $\mathbf{X}_0 \in \mathcal{X}^{n}$, any pair of nodes $u,v \in V$ and any large enough $d$,
    \begin{equation*}
        \bbP_{\mathbf{X}_0} \pa{ |N^{(u)}_{d^2} - N^{(v)}_{d^2}| <  d} \leq 1-\varepsilon.
    \end{equation*}
\end{lemma}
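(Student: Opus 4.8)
The plan is to deduce the statement from the single-chain anti-concentration bound of \Cref{lem:anti_concentration_Nt}, via two reductions: first from the difference of two chains to a single chain using independence, and then from an arbitrary initial configuration to the canonical $W$-start for which \Cref{lem:anti_concentration_Nt} is available. I expect the second reduction to be the main obstacle, since \Cref{lem:anti_concentration_Nt} is proved only for chains initialized at $W$, whereas here $\mathbf{X}_0$ is arbitrary.

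\emph{Independence reduction.} Since the chains $(X^{(u)}_t)_t$ and $(X^{(v)}_t)_t$ are independent (each coordinate of the product chain evolves on its own), I would condition on the trajectory of $v$, which fixes the integer $m := N^{(v)}_{d^2}$. This gives
\begin{equation*}
    \bbP_{\mathbf{X}_0}\pa{|N^{(u)}_{d^2} - N^{(v)}_{d^2}| < d} \leq \sup_{m \in \bbN} \bbP_{\mathbf{X}_0}\pa{|N^{(u)}_{d^2} - m| < d},
\end{equation*}
so it suffices to establish an anti-concentration bound for the single variable $N^{(u)}_{d^2}$, uniformly over the center $m$ and over the starting state of the $u$-chain.

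\emph{From arbitrary start to $W$-start.} To control $N^{(u)}_{d^2}$ when $X^{(u)}_0$ is arbitrary, I would use that from any state the chain reaches $W$ within at most two rounds, since the transitions $B \to F \to W$ and $F \to W$ are deterministic under $P$. Let $T \leq 2$ be the first round at which $X^{(u)}$ is in $W$. By the Markov property, conditionally on $T$ the shifted process $(X^{(u)}_{T+s})_{s \geq 0}$ is a copy of the chain started at $W$, so the number $N'$ of beeps in rounds $T+1,\ldots,d^2$ is distributed as the beep-count of a $W$-started chain over $d^2-T$ rounds. Writing $N^{(u)}_{d^2} = (\text{beeps during the burn-in } 0,\ldots,T) + N'$ and absorbing the bounded burn-in count into the center, I obtain
\begin{equation*}
    \bbP_{\mathbf{X}_0}\pa{|N^{(u)}_{d^2} - m| < d \,\middle|\, T} \leq \sup_{m' \in \bbN} \bbP\pa{|N^{W}_{d^2 - T} - m'| < d},
\end{equation*}
where $N^{W}_{d^2-T}$ denotes a $W$-started beep-count over $d^2-T$ rounds.

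\emph{Matching the radius and concluding.} The last point is to reconcile the target radius $d$ with the radius $\sqrt{t}$ in \Cref{lem:anti_concentration_Nt}. Because all these counts are integer-valued, the strict event $|N^{W}_{d^2-T} - m'| < d$ coincides with $|N^{W}_{d^2-T} - m'| \leq d-1$, and for $d$ large enough one has $d-1 \leq \sqrt{d^2-T}$ (using $T \leq 2$, this reduces to $T \leq 2d-1$). Hence the event is contained in $\{|N^{W}_{d^2-T} - m'| \leq \sqrt{d^2-T}\}$, and since $d^2-T$ is large, \Cref{lem:anti_concentration_Nt} applied at time $t = d^2-T$ yields $\sup_{m'} \bbP(|N^{W}_{d^2-T} - m'| \leq \sqrt{d^2-T}) \leq 1-\varepsilon$. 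Chaining the three displays and taking expectation over $T$ gives the claim with the same constant $\varepsilon = \varepsilon(p)$. The crux is the arbitrary initial configuration; the bounded burn-in combined with the integrality trick is precisely what lets us fall back on the $W$-start, fixed-radius form of \Cref{lem:anti_concentration_Nt}.
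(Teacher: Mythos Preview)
Your proof is correct and complete. It differs from the paper's argument in how the arbitrary initial configuration is handled. The paper introduces an auxiliary chain $\tilde{X}$ started from the stationary distribution $\pi$ and coupled with $X^{(u)}$ via the classical ``move independently until meeting, then coalesce'' coupling; it then proves a separate claim that $|\tilde{N}_t - N^{(u)}_t| \leq 1$ for all $t$ (essentially because the two walks on the three-state cycle cannot pass each other without meeting), and finally conditions on $N^{(v)}_{d^2}$ and applies the anti-concentration bound to $\tilde{N}_{d^2}$. Your approach instead exploits the specific structure of this chain: from any state, $W$ is reached deterministically within two steps, so a bounded burn-in reduces directly to a $W$-started chain, and the burn-in beep count (at most one) is absorbed into the center $m$. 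This is more elementary---no stochastic coupling, no auxiliary claim---and it matches the stated hypothesis of \Cref{lem:anti_concentration_Nt} ($W$-start) exactly, whereas the paper applies the bound to a $\pi$-started chain. The paper's coupling route is more portable to chains lacking a deterministic return to a fixed state, but for this particular three-state chain your argument is the cleaner one. One cosmetic remark: since $T$ is determined by $\mathbf{X}_0(u)$, ``taking expectation over $T$'' is just a finite case split.
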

\begin{proof}
    Let~$(\Tilde{X}_t)_{t \geq t_0}$ be a Markov chain with the same transition matrix $P$, and such that it starts from the stationary distribution, i.e. $\Tilde{X}_0 \sim \pi$. 
    We define a coupling between $(\Tilde{X}_t)_{t \in \mathbb{N}}$ and $X_t^{(u)}$ such that $\Tilde{X}_t$ and $X_t^{(u)}$ move around independently until they collide, and after this time they stick together and move together. In other word, if we define
    \[\tau = \min\{t \geq 0: \Tilde{X}_t = X_t^{(u)}\},\]
    then we have that
    \[\Tilde{X}_t = X_t^{(u)} \quad \text{for every $t \geq \tau$.}\]
    Denote with $\Tilde{\bbP}_{\mathbf{X}_0}(\cdot)$ the joint distribution of $(\Tilde{X}_t, X_t^{(u)})_t$, with $X_t^{(u)}$ starting from $\mathbf{X}_0(u)$.

    
    \begin{claim} \label{claim:cant_outrun}
        For every~$t \geq 0$,  $\Tilde{\bbP}_{\mathbf{X}_0}\pa{ |\Tilde{N}_t - N_t^{(u)}|\leq 1}= 1$
    \end{claim}
    \begin{proof}
    Let $T$ be the first meeting time of $\Tilde{N}_t$ and $N_t^{(u)}$. Assume that there exists $t < T$ such that $|\Tilde{N}_t - N_t^{(u)}| \geq 2$.
    W.l.g., assume that $\Tilde{N}_t \geq 2+ N_t^{(u)}$. This implies that the walk $\Tilde{X}_t$ did at least one more complete clockwise turn of the cycle in the picture in \cref{eq:transition_matrix} respect to $X_t^{(u)}$. But this contradicts the fact that $t < T$, since in that case $\Tilde{X}_t$ and $X_t^{(u)}$ necessarily met before or during time $t$. Then, for every round $t \geq T$ it holds $X_t^{(u)} = \Tilde{X}_t$, so this proves the lemma.
      \end{proof}
        Then, taking $\varepsilon>0$ as in \cref{lem:anti_concentration_Nt}, we have that
        \begin{align*}
            & \bbP_{\mathbf{X}_0}\pa{|N^{(u)}_{d^2}-N^{(v)}_{d^2}| <  d} 
            \\ & = \Tilde{\bbP}_{\mathbf{X}_0} \pa{|(N_{d^2}^{(u)}-\Tilde{N}_{d^2}) + (\Tilde{N}_{d^2} - N_{d^2}^{(v)})| <  d}  & \text{(by the coupling definition)}
            \\ & \leq \Tilde{\bbP}_{\mathbf{X}_0} \pa{|N_{d^2}^{(u)}- \Tilde{N}_{d^2}| + |\Tilde{N}_{d^2}- N_{d^2}^{(v)}| <  d} 
            \\ & \leq \Tilde{\bbP}_{\mathbf{X}_0} \pa{|\Tilde{N}_{d^2} - N_{d^2}^{(v)}|\leq  d} &\text{(by \cref{claim:cant_outrun})}
            \\ & = \sum_{m \in \mathbb{N}}\Tilde{\bbP}_{\mathbf{X}_0}\pa{|\Tilde{N}_{d^2}- m| \leq  d \mid N_{d^2}^{(v)}= m} \bbP_{\mathbf{X}_0}\pa{N_{d^2}^{(v)} = m}
            \\ & = \sum_{m \in \mathbb{N}} \bbP \pa{|\Tilde{N}_{d^2}-m| \leq d} \bbP_{\mathbf{X}_0} \pa{N_{d^2}^{(v)} = m} & \text{(independence of $\Tilde{N}_{d^2}$ and $N_{d^2}^{(v)}$)}
            \\ &\leq (1-\varepsilon)\sum_{m \in \mathbb{N}}\bbP_{\mathbf{X}_0}\pa{N_{d^2}^{(v)} = m} & \text{(by \cref{lem:anti_concentration_Nt})}
            \\ & \leq 1-\varepsilon.
        \end{align*}
\end{proof}

In the following lemma, we show an upper bound on the variables~$\sigma_{u,v}$ that holds with high probability. This is achieved by using the results obtained so far, and adding a~$\log n$ factor.

\begin{lemma} \label{lemma:upper_bound_tau}
    There exists a constant~$A > 0$ s.t., for any $u,v \in V$
    \begin{equation*}
        \bbP(\sigma_{u,v} > A D^2 \log n) \leq n^{-4}.
    \end{equation*}
\end{lemma}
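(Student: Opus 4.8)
The plan is to partition the timeline into consecutive windows of length $d^2$ with $d := 2D+1$, show that each window pushes $|N^{(u)}_t - N^{(v)}_t|$ past $D$ with probability at least a constant, and then amplify this to the claimed high-probability bound over $\Theta(\log n)$ windows. Write $\Delta_t := N^{(u)}_t - N^{(v)}_t$, let $(\mathcal{F}_t)$ be the natural filtration of $(\mathbf{X}_t)$, and for $k \ge 0$ set $I_k := (kd^2, (k+1)d^2]$ and
$$ G_k := \left\{ \, |\Delta_{(k+1)d^2} - \Delta_{kd^2}| \ge d \, \right\}, $$
the event that the cumulative difference moves by at least $d$ across the $k$-th window.

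First I would control a single window. By the Markov property of $(\mathbf{X}_t)$ at the deterministic time $kd^2$, the beep counts accumulated during $I_k$ have, conditionally on $\mathcal{F}_{kd^2}$, the law of $(N^{(u)}_{d^2}, N^{(v)}_{d^2})$ for a fresh copy of the chain started from the configuration $\mathbf{X}_{kd^2}$. Applying \Cref{lemma:one_step_deviation} with initial configuration $\mathbf{X}_{kd^2}$ therefore gives
$$ \bbP\left( G_k^c \mid \mathcal{F}_{kd^2} \right) \le 1-\varepsilon $$
for every $k$, with $\varepsilon = \varepsilon(p)>0$ independent of the window. For this I need $d$ to exceed the constant threshold hidden in the ``large enough $d$'' hypothesis of \Cref{lemma:one_step_deviation}; when $D$ is below a constant I would simply enlarge $d$ to a sufficiently large constant, which keeps $d^2 = O(D^2)$ since $D \ge 1$.

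The key structural step is to argue that a single successful window forces $\sigma_{u,v}$ to have already occurred, and here the only subtlety is cancellation: a large increment over one window does not by itself guarantee a large \emph{cumulative} difference, because $\Delta_{kd^2}$ could carry the opposite sign. This is exactly why I take $d = 2D+1$ rather than $d = D+1$. Indeed, suppose $\sigma_{u,v} > kd^2$, so that $|\Delta_{kd^2}| \le D$; if $G_k$ occurs then
$$ |\Delta_{(k+1)d^2}| \ge |\Delta_{(k+1)d^2} - \Delta_{kd^2}| - |\Delta_{kd^2}| \ge d - D = D+1 > D, $$
so $\sigma_{u,v} \le (k+1)d^2$. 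Contrapositively, $\{\sigma_{u,v} > (k+1)d^2\} \subseteq \{\sigma_{u,v} > kd^2\} \cap G_k^c$, and iterating down to $k=0$ yields the containment $\{\sigma_{u,v} > md^2\} \subseteq \bigcap_{k=0}^{m-1} G_k^c$.

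Finally I would assemble the tail bound. Since $\prod_{k=0}^{m-2}\bbOne_{G_k^c}$ is $\mathcal{F}_{(m-1)d^2}$-measurable, repeatedly conditioning on $\mathcal{F}_{(m-1)d^2}, \mathcal{F}_{(m-2)d^2}, \ldots$ and applying the one-window bound at each step gives $\bbP\left(\bigcap_{k=0}^{m-1} G_k^c\right) \le (1-\varepsilon)^m$. Choosing $m = \lceil 4\varepsilon^{-1}\ln n \rceil$ makes $(1-\varepsilon)^m \le e^{-\varepsilon m} \le n^{-4}$, and then $md^2 = O(\varepsilon^{-1}\log n)\cdot(2D+1)^2 = O(D^2\log n)$, so there is a constant $A$ with $AD^2\log n \ge md^2$ and hence $\bbP(\sigma_{u,v} > AD^2\log n) \le \bbP(\sigma_{u,v} > md^2) \le n^{-4}$. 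The main obstacle is the third step — converting the per-window anti-concentration of \Cref{lemma:one_step_deviation} into genuine escape of the cumulative difference past $D$ — which the choice $d = 2D+1$ resolves; the remaining work (the Markov-property conditioning and the geometric-decay product bound) is routine.
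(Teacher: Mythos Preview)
Your proof is correct and follows the same scheme as the paper: chop time into $\Theta(\log n)$ windows of length $\Theta(D^2)$, apply \Cref{lemma:one_step_deviation} to each window via the Markov property, and multiply the resulting $(1-\varepsilon)$ bounds. The paper takes windows of length $D^{2}$ and bounds the cumulative events $\mathcal{E}_k=\{|N^{(u)}_{kD^2}-N^{(v)}_{kD^2}|\le D\}$ directly---tacitly using that the \emph{proof} of \Cref{lemma:one_step_deviation} (through the $\sup_m$ in \Cref{lem:anti_concentration_Nt}) yields anti-concentration of the increment around an arbitrary center, not just $0$---whereas your choice $d=2D+1$ reaches the same conclusion from the lemma \emph{as stated} via the triangle-inequality step you highlight; this costs only a constant factor in the window length and makes the cancellation issue you flag completely explicit.
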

\begin{proof}
   Let~$\mathcal{E}_k$ the event~$\{|N^{(u)}_{k D^2} - N^{(v)}_{k D^2}| \leq D\}$.
    By definition of $\sigma_{u,v}$ in \cref{eq:def_sigma}, for every~$A > 0$,
    \begin{equation*}
        \{ \sigma_{u,v} > A D^2 \log n \} \implies \bigcap_{k=1}^{\lfloor A \log n \rfloor} \mathcal{E}_k.
    \end{equation*}
    Therefore, by the chain rule, \Cref{lemma:one_step_deviation} and the Markov property 
    \begin{equation*}
        \bbP(\sigma_{u,v} > A D^2 \log n) \leq \bbP \pa{ \mathcal{E}_1 } \cdot \prod_{k=2}^{\lfloor A \log n \rfloor} \bbP \pa{ \mathcal{E}_k \mid \mathcal{E}_{k-1} } \leq (1-\varepsilon)^{\lfloor A \log n \rfloor} \leq \frac{1}{n^2},
    \end{equation*}
    where~$\varepsilon = \varepsilon(p)>0$ as in \cref{lemma:one_step_deviation}, and we consider $A \geq 4 \log^{-1}\pa{\frac{1}{1-\varepsilon}}$, which concludes the proof.
\end{proof}


Finally, we combine \Cref{lemma:upper_bound_tau} with insights on the flow from \Cref{sec:flow} to show that, for any pair of leaders in the graph, the difference in their numbers of beeps becomes large enough that paths between the two nodes are eventually saturated by beep waves in one direction. This necessarily leads to the elimination of the leader with fewer beeps, and \Cref{thm:main} follows from a union bound.

\begin{proof} [Proof of \Cref{thm:main}]
    For every node~$u$, as long as~$u$ is a leader and by definition of the protocol, we can couple the state of~$u$ with~$(X_t^{(u)})_{t \in \bbN}$. In other words, up to the round where~$u$ is eliminated,
    \begin{align*}
        u \in W^\lead_t &\implies X_t^{(u)} = W, \\
        u \in B^\lead_t &\implies X_t^{(u)} = B, \\
        u \in F^\lead_t &\implies X_t^{(u)} = F.
    \end{align*}
    If~$u$ is eliminated, we simply stop enforcing the coupling in subsequent rounds.
    \begin{claim} \label{claim:elimination_condition}
        In round~$\sigma_{u,v}$, either~$u$ or~$v$ has been eliminated.
    \end{claim}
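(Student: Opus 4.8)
The plan is to argue by contradiction, combining the coupling set up at the beginning of the proof of \Cref{thm:main} with the purely deterministic bound of \Cref{lemma:max_beepcount_difference}. Suppose, toward a contradiction, that neither $u$ nor $v$ has been eliminated by round $\sigma_{u,v}$.

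Under this hypothesis, both $u$ and $v$ remain leaders throughout every round $t \leq \sigma_{u,v}$. Since the coupling between the genuine protocol and the i.i.d.\ chains $(X_t^{(u)})$, $(X_t^{(v)})$ is valid precisely up to the round where each node is eliminated, it therefore holds for both nodes at round $\sigma_{u,v}$. In particular, the number of beeps each node has emitted matches the number of visits to state~$B$ of its chain, i.e.\ $\beepcount_{\sigma_{u,v}}(u) = N^{(u)}_{\sigma_{u,v}}$ and $\beepcount_{\sigma_{u,v}}(v) = N^{(v)}_{\sigma_{u,v}}$.

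Next I would invoke the definition of $\sigma_{u,v}$ in \Cref{eq:def_sigma}: at that round the two visit counts differ by strictly more than $D$, so $|\beepcount_{\sigma_{u,v}}(u) - \beepcount_{\sigma_{u,v}}(v)| = |N^{(u)}_{\sigma_{u,v}} - N^{(v)}_{\sigma_{u,v}}| > D$. Because $D$ is the diameter of $G$, we have $\dis(u,v) \leq D$, whence $|\beepcount_{\sigma_{u,v}}(u) - \beepcount_{\sigma_{u,v}}(v)| > \dis(u,v)$. This directly contradicts \Cref{lemma:max_beepcount_difference}, which bounds the gap in beep counts between any two nodes by their graph distance. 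The contradiction forces at least one of $u,v$ to have been eliminated by round $\sigma_{u,v}$, which is exactly the claim.

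There is no genuine analytic obstacle here; the statement is essentially a one-line consequence of the Ohm's-law machinery of \Cref{sec:flow}. The only point requiring care is the validity of the coupling, which holds \emph{only} while a node is a leader: the contradiction hypothesis (that neither node is eliminated yet) is precisely what lets us transfer the Markov-chain gap $|N^{(u)}_t - N^{(v)}_t|$ to the true beep-count gap $|\beepcount_t(u) - \beepcount_t(v)|$ that \Cref{lemma:max_beepcount_difference} controls. I would also check that the coupling remains exact up to and including round $\sigma_{u,v}$ itself rather than breaking one round early, but this is immediate since both nodes are leaders at that round.
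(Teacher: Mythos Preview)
Your proposal is correct and follows essentially the same approach as the paper: argue by contradiction, use the coupling to identify $\beepcount_{\sigma_{u,v}}(\cdot)$ with $N^{(\cdot)}_{\sigma_{u,v}}$, invoke the definition of $\sigma_{u,v}$ to get a gap exceeding $D \geq \dis(u,v)$, and contradict \Cref{lemma:max_beepcount_difference}. Your write-up is slightly more explicit than the paper's (you spell out $\dis(u,v) \leq D$ and the coupling's validity at round $\sigma_{u,v}$), but the argument is the same.
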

    \begin{proof}
        Assume for the sake of contradiction that both~$u$ and~$v$ are still leaders in round~$t = \sigma_{u,v}$.
        Then by construction, $N_t^{(u)} = \beepcount_t(u)$ and $N_t^{(v)} = \beepcount_t(v)$.
        Moreover, by definition of~$\sigma_{u,v}$, $|N_t^{(u)}-N_t^{(v)}| > D$.
        This implies~$|\beepcount_t(u) - \beepcount_t(v)| > \dis(u,v)$, which is in contradiction with \Cref{lemma:max_beepcount_difference}.
        This concludes the proof of \Cref{claim:elimination_condition}.
    \end{proof}
    By letting~$A$ be the constant in \Cref{lemma:upper_bound_tau}, we have that
    \begin{align*}
        &\bbP ( \text{There are still 2 leaders in round } AD^2 \log n ) \\
        &\leq \bbP \pa{ \bigcup_{u,v \in V} \{\sigma_{u,v} > AD^2 \log n \} } & \text{(by \Cref{claim:elimination_condition})} \\
        &\leq n^2 \cdot n^{-4} = n^{-2}, & \text{(by \Cref{lemma:upper_bound_tau} and union bound)}
    \end{align*}
    which concludes the proof of \Cref{thm:main}.
\end{proof}

\subsection{\texorpdfstring{Proof of \Cref{thm:side}}{Proof of Theorem 3}} \label{sec:side_result}

In this section, we restrict attention to the case that~$p = 1/(D+1)$. We use the same approach as for the proof of \Cref{thm:main}.
In this regime, every~$\Theta(D)$ rounds and for any pair of leaders~$(u,v)$, there is a constant probability that~$u$ beeps and not~$v$.
Insights from \Cref{sec:flow} imply that in that case, $v$ is eliminated.
Therefore, we show that convergence happens in only~$O(D \log n)$ rounds with high probability, and without the need for anti-concentration results.

\begin{lemma} \label{lemma:elimination}
    Assume that~$p = 1/(D+1)$, and fix a round~$t \in \bbN$.
    If~$u$ and~$v$ are both leaders in round~$t$, then with constant probability, at least one of them is eliminated before round~$t+2D+1$.
\end{lemma}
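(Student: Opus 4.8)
The plan is to exploit the fact that, when $p = 1/(D+1)$, a leader beeps on average once every $\Theta(D)$ rounds, so within a window of length $\Theta(D)$ there is constant probability that $u$ emits a beep while $v$ stays completely silent. Note that the mechanism used for \Cref{thm:main} is unavailable here: in only $O(D)$ rounds one cannot drive $|\beepcount_{\cdot}(u)-\beepcount_{\cdot}(v)|$ above $D$, so I will not rely on \Cref{lemma:max_beepcount_difference}. Instead, I will turn a \emph{single} extra beep of $u$ into the elimination of $v$, using the propagation guarantee of \Cref{lemma:traveling_beep}: as soon as $\beepcount_s(u) > \beepcount_s(v)$, a beep is forced to reach $v$ within $\dis(u,v) \le D$ rounds, and if $v$ has meanwhile produced no beep of its own, that incoming beep can only be an elimination beep.

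Concretely, I first relabel the two nodes so that $\beepcount_t(u) \ge \beepcount_t(v)$; since the statement only requires that \emph{one} of them be eliminated, this is without loss of generality. As in the proof of \Cref{thm:main}, I couple the trajectory of each node, as long as it is a leader, with an independent copy of the chain with transition matrix $P$. I then introduce two events on these (independent) chains: $A_u$, that $u$ beeps at least once during $[t+1, t+D]$, and $A_v$, that $v$ does not beep at all during $[t+1, t+2D]$. Using the renewal description $\tau \sim 2 + \mathrm{Geom}(p)$ from the proof of \Cref{lem:anti_concentration_Nt} together with $\pi_B = 1/(D+3)$ from \Cref{eq:stationary_distribution}, both events have probability bounded below by an absolute constant, uniformly in $D$ and irrespective of the phase of the chain at round $t$ (only $\Theta(D)$ independent coins of bias $1/(D+1)$ are involved, giving bounds of the form $1-(1-p)^{\Theta(D)}$ and $(1-p)^{\Theta(D)}$). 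Since $A_u$ and $A_v$ concern distinct independent chains, $\bbP(A_u \cap A_v) \ge c$ for some constant $c = c(p) > 0$.

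The deterministic core is then as follows. Assume $A_u \cap A_v$ holds and, for contradiction, that neither $u$ nor $v$ is eliminated during $[t, t+2D]$; then the coupling stays valid for both nodes throughout this window, so $v$ is genuinely silent on $[t+1,t+2D]$ and $u$ genuinely beeps at some round $s_u \le t+D$. Because $v$ produces no beep after round $t$, we get $\beepcount_{s_u}(v) = \beepcount_t(v) \le \beepcount_t(u) < \beepcount_{s_u}(u)$. Applying \Cref{lemma:traveling_beep} at round $s_u$ yields a round $s' \le s_u + \dis(u,v) \le t + 2D$ with $v \in B_{s'}$, and since $s' \ge s_u \ge t+1$ this contradicts the silence of $v$ on $[t+1,t+2D]$. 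Hence, on $A_u \cap A_v$, at least one of $u,v$ is eliminated by round $t+2D < t+2D+1$, and the lemma follows with probability at least $\bbP(A_u \cap A_v) \ge c$.

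I expect the main obstacle to be the bookkeeping of this deterministic step: one must be careful to invoke the coupling only while both nodes are leaders (so that the chain events $A_u,A_v$ translate into genuine beeps and silences of the real \protocolname process), and to check that the beep $v \in B_{s'}$ forced by \Cref{lemma:traveling_beep} lands strictly inside the silence window $[t+1,t+2D]$ rather than at its boundary. A secondary care point is the uniformity of the lower bounds on $\bbP(A_u)$ and $\bbP(A_v)$ over the unknown phase of the coupled chains at round $t$; the worst cases are when $u$ has just beeped (and is frozen) or when $v$ is itself mid-beep at round $t$, but in each case only $O(D)$ coin tosses separate the node from the desired behaviour, so the constants survive.
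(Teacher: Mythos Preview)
Your approach is essentially the paper's: couple each leader to an independent copy of the $\{W,B,F\}$ chain, condition on $u$'s copy visiting $B$ within the first $\Theta(D)$ rounds while $v$'s copy avoids $B$ for $\Theta(D)$ further rounds, and then invoke \Cref{lemma:traveling_beep} to force a beep at $v$ that, under $A_v$, can only be an elimination beep. The paper separates the cases $\beepcount_t(u)\neq\beepcount_t(v)$ (where only the event on $v$ is needed) and $\beepcount_t(u)=\beepcount_t(v)$ (where both events are used), whereas you collapse them via the WLOG assumption $\beepcount_t(u)\ge\beepcount_t(v)$ and always use both events---a cosmetic difference. One small caveat on your ``constants survive'' claim: the window $[t+1,t+D]$ for $A_u$ is too short when $D\le 2$ and $u$'s chain is in state $B$ or $F$ at round $t$, since it then cannot return to $B$ in time and $\bbP(A_u)=0$; the paper's window $[t,t+D+1]$ avoids this by letting round $t$ itself count, and padding your window by two rounds fixes it just as well.
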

\begin{proof}
    First, note that since~$D \geq 1$,
    \begin{equation} \label{eq:simple_bounds}
        \frac{1}{4} \leq \pa{1-\frac{1}{D+1}}^{D+1} < \frac{1}{e},
    \end{equation}
    which follows by analyzing the function~$x \rightarrow (1-1/x)^x$.
    In what follows, 
    we consider the same coupling as in the proof of \Cref{thm:main}, between the chain~$(X_t^{(u)})_{t \in \bbN}$, and the state of~$u$ up to the round where it is eliminated.
    We write~$N_{t_1:t_2}^{(u)}$ to denote the number of visits of~$(X_t^{(u)})_{t \in \bbN}$ to state~$B$ between rounds~$t_1$ and~$t_2$.

    First, we consider the case that~$|\beepcount_t(u) - \beepcount_t(v)| \neq 0$. 
    Without loss of generality, we assume that~$\beepcount_t(u) > \beepcount_t(v)$. Let~$A$ be the event that~$v$ does not beep (by itself) between rounds~$t$ and~$t+2(D+1)$, that is,
    \begin{equation*}
        A := \left\{ N_{t:t+2(D+1)}^{(v)} = 0 \right\}.
    \end{equation*}
    By \Cref{eq:simple_bounds},
    \begin{equation*}
        \bbP(A) = \pa{1-p}^{2(D+1)} = \pa{1-\frac{1}{D+1}}^{2(D+1)} \geq \frac{1}{4^2} = \frac{1}{16},
    \end{equation*}
    i.e., $A$ happens with constant probability.
    Moreover, by \Cref{lemma:traveling_beep}, $v$ must beep at least once between rounds~$t$ and~$t+D$. Therefore, conditioning on~$A$, $v$ beeps in a round~$t' \leq t+D$ where~$X_{t'}^{(v)} \neq B$; this implies that~$v \in B^\nlead_{t'}$, i.e., $v$ is not a leader in round~$t'$.
    
    Now, consider the case that~$\beepcount_t(u) = \beepcount_t(v)$. Let~$A$ be defined as before, and~$B$ the event that~$u$ beeps at least once (by itself) between rounds~$t$ and~$t+(D+1)$, that is,
    \begin{equation*}
        B := \left\{ N_{t:t+(D+1)}^{(u)} \geq 1 \right\}.
    \end{equation*}
    By \Cref{eq:simple_bounds},
    \begin{equation*}
        \bbP(B) = 1-\pa{1-p}^{(D+1)} \geq 1-\frac{1}{e}.
    \end{equation*}
    Since~$(X_t^{(u)})_{t \in \bbN}$ and~$(X_t^{(v)})_{t \in \bbN}$ are independent, $A \cap B$ happens with constant probability.
    Conditioning on~$A \cap B$, there is a round~$s \leq t+(D+1)$ such that~$\beepcount_s(u) \geq 1$, and we can conclude as before with \Cref{lemma:traveling_beep} that in some round~$t' \leq s+D \leq t+2D+1$, $v$ is not a leader anymore.
    This establishes \Cref{lemma:elimination}.
\end{proof}

\begin{proof} [Proof of \Cref{thm:side}]
    For~$u,v \in V$, let~$A_t^{(u,v)}$ the event that~$u$ and~$v$ are both leaders in round~$t$. By \Cref{lemma:elimination}, for every~$t \in \bbN$, there is a constant~$c < 1$ s.t.
    \begin{equation*}
        \bbP \pa{A_{t+2D+1}^{(u,v)} \mid A_t^{(u,v)}} \leq c.
    \end{equation*}
    By taking~$k \geq 4/\log \frac{1}{c}$, we obtain
    \begin{equation*}
        \bbP\pa{A_{k \log n \cdot (2D+1)}^{(u,v)}} = \bbP \pa{A_0^{(u,v)}} \prod_{i=0}^{k \log n-1} \bbP \pa{A_{(i+1)(2D+1)}^{(u,v)} \mid A_{i(2D+1)}^{(u,v)}} \leq c^{k \log n} \leq n^{-4},
    \end{equation*}
    and we can conclude (using an union bound over all pairs of agents) that a single leader remains in round~$k \log n \cdot (2D+1)$ with high probability, which concludes the proof of \Cref{thm:side}.
\end{proof}

\section{Discussion}

In this article, we identify and analyze a simple protocol for solving Leader Election within the beeping model.
Beyond its natural simplicity, our algorithm is remarkably resource-efficient: it requires only six memory states, relies solely on fair coin tosses, and does not depend on unique identifiers or any prior knowledge of the underlying communication graph. This distinguishes it from most existing protocols for this problem, where the use of unique identifiers is standard, which implies a memory space dependent on~$n$.
The main drawbacks of our approach are the overhead in convergence time, which scales approximately linearly with the graph’s diameter, and no termination detection.

Beyond directly comparing performance and resource efficiency, we believe our contribution differs slightly in nature from most works in the literature. While they usually focus on minimizing computation time and obtaining desirable properties, we are interested in exploring what can be computed by elementary protocols -- those likely to emerge spontaneously in simple organisms through evolution -- when communication is restricted to minimal signals, such as beeps. In that regard, our approach is similar to the one of~\cite{gilbert_computational_2015}. We think both perspectives contribute in their own way to a deeper understanding of the capabilities of distributed systems.

The strongest obstacle to the biological plausibility of our algorithm lies in the assumption that the initial configuration contains at least one leader, and no agent in a beeping state. Relaxing this assumption without compromising the protocol’s simplicity appears challenging. Indeed, if the initial configuration was arbitrary, it could include persistent and deterministic beep waves traveling along cycles of the graph, while no leader would be present in the network. These deterministic waves could be almost indistinguishable, from the point of view of each node, from those that would be randomly emitted by a leader in a correct configuration.
Identifying another simple but more robust rule remains an open question for future research.

Finally, although we do not demonstrate it here, we believe that the upper bound in Theorem 1 is tight, up to a factor of~$\log n$. To illustrate this, consider a configuration in which the only two leaders are positioned at the ends of a path of length~$D$. The point where the waves emitted by each leader meet appears to move over time like a simple random walk. The time it takes for one of the leaders to be eliminated corresponds to the time it takes for this random walk to reach one of the ends, which is~$\Theta(D^2)$. If this argument proves correct, it would suggest the existence of a space-time trade-off in this model, which would be yet another intriguing research direction.

\paragraph{Acknowledgment}
The authors wish to thank Francesco D'Amore for engaging discussions on the topic.
Part of the work was done while the authors were affiliated with Bocconi University, Milan.

\bibliographystyle{plain}
\bibliography{references}

\clearpage
\appendix

\section{Probabilistic Tools} \label{appendix:proba}


We write~$\Phi$ to denote the density function of the normal law~$\mathcal{N}(0,1)$:
\begin{equation} \label{eq:phi}
    \Phi(z) := \frac{1}{\sqrt{2\pi}} \int_{-\infty}^z e^{-t^2/2} dt.
\end{equation}

Let~$(X_n)_{n \in \bbN}$ be a sequence of random variables on a given probability space~$(\Omega,\mathcal{F},\bbP)$. $(X_n)_{n \in \bbN}$ is said to be {\em stationary} if its joint probability distribution is invariant over time: for every~$n_1,n_2,k \in \bbN$, $(X_{n_1},\ldots,X_{n_1+k})$ and~$(X_{n_2},\ldots,X_{n_2+k})$ are equal in distribution. 

For any two $\sigma$-algebra~$\mathcal{A}$ and~$\mathcal{B} \subseteq \mathcal{F}$, let
\begin{equation*}
    \alpha(\mathcal{A},\mathcal{B}) := \sup_{A \in \mathcal{A}, B \in \mathcal{B}} |\bbP(A\cap B) - \bbP(A) \, \bbP(B)|.
\end{equation*}
For~$i,j \in \bbN \cup \{+\infty\}$, let~$\mathcal{F}_i^j$ be the~$\sigma$-algebra generated by the random variables~$\{X_k, i \leq k \leq j\}$.
$(X_n)_{n \in \bbN}$ is said to be {\em $\alpha$-mixing} if
\begin{equation*}
    \alpha(n) := \sup_{k \in \bbN} \, \alpha \pa{ \mathcal{F}_{0}^{k} , \mathcal{F}_{k+n}^{+\infty}} \longrightarrow 0.
\end{equation*}

The following result can be found in~\cite[Theorem 2 with~$\delta=1$]{tikhomirov_convergence_1981}.
\begin{theorem} [Berry-Esseen Theorem for Strongly Mixing Sequences] \label{thm:alpha_mixing_berry_esseen}
    Suppose that~$(X_n)_{n \in \bbN}$ is stationary and~$\alpha$-mixing, with~$\bbE(X_n) = 0$ and~$\bbE(|X_n|^3) < +\infty$.
    Let~$s_n := \sum_{k=1}^n X_k$, and
    \begin{equation*}
        S_n := \frac{s_n}{\sqrt{\Var(s_n)}} = \frac{s_n}{\sqrt{\bbE\left[\pa{\sum_{k=1}^n X_k }^2\right]}}.
    \end{equation*}
    If there are positive constants~$K,\beta$ such that
    \begin{equation*}
        \alpha(n) \leq K e^{-\beta n},
    \end{equation*}
    then there is a constant~$A := A(K,\beta)$ such that
    \begin{equation*}
        \sup_{z \in \bbR} |\bbP(S_n < z) - \Phi(z)| \leq A \, \frac{\log^2 n}{\sqrt{n}}.
    \end{equation*}
\end{theorem}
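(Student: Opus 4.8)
The statement is the classical Berry--Esseen bound for exponentially $\alpha$-mixing stationary sequences, and I would prove it along the standard route combining the Bernstein blocking technique with Fourier estimates. Write $f_n$ for the characteristic function of $S_n$, and recall that $t \mapsto e^{-t^2/2}$ is the characteristic function of $\mathcal{N}(0,1)$. The plan is to first reduce the Kolmogorov distance $\sup_z |\bbP(S_n < z) - \Phi(z)|$ to an integral of $|f_n(t) - e^{-t^2/2}|/|t|$ via Esseen's smoothing inequality, namely
\[
\sup_{z \in \bbR} |\bbP(S_n < z) - \Phi(z)| \le \frac{1}{\pi}\int_{-T}^{T} \frac{|f_n(t) - e^{-t^2/2}|}{|t|}\,dt + \frac{C}{T},
\]
valid for a universal constant $C$ and every $T > 0$; the truncation level will eventually be taken as $T \asymp \sqrt{n}/\log n$, which already accounts for one of the two logarithmic factors in the final rate. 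It then remains to bound the integral by showing that, after removing negligible buffer terms, $S_n$ is a normalized sum of nearly independent and identically distributed blocks to which the i.i.d.\ Berry--Esseen theorem applies.

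For the blocking step I would partition $\{1,\dots,n\}$ into alternating \emph{big} blocks of length $p$ and \emph{small} blocks of length $q$, giving $m \asymp n/(p+q)$ big blocks. Since $\alpha(q) \le K e^{-\beta q}$, taking $q = \Theta(\log n)$ makes $\alpha(q)$ smaller than any fixed power of $n$, while $p \gg q$ is chosen so that $m \to \infty$. Writing $U_1,\dots,U_m$ for the standardized big-block sums and $R_n$ for the contribution of the small blocks, stationarity renders the $U_i$ identically distributed. Two auxiliary estimates are needed: that the small blocks are negligible, $\Var(R_n) = o(\Var(s_n))$, and that $\Var(s_n) \asymp n$ with a finite, positive long-run variance $\sigma^2 = \sum_{k \in \mathbb{Z}} \Cov(X_0, X_k)$. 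The summability of this series, as well as all the covariance remainders, follow from the $\alpha$-mixing covariance inequality $|\Cov(X_0,X_k)| \le C\,\alpha(k)^{1/3}\,\bbE(|X_0|^3)^{2/3}$ together with the exponential decay of $\alpha(\cdot)$ and the third-moment hypothesis (which is the case $\delta = 1$ of the cited result).

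The heart of the argument is to replace the joint characteristic function of $(U_1,\dots,U_m)$ by the product of its marginals. Using the buffer lemma for $\alpha$-mixing --- for $Y$ measurable with respect to $\mathcal{F}_0^a$ and $Z$ measurable with respect to $\mathcal{F}_{a+q}^{+\infty}$ one has $|\bbE(e^{i(Y+Z)}) - \bbE(e^{iY})\bbE(e^{iZ})| \le C\,\alpha(q)$ --- and peeling off one block at a time, the total replacement error is at most $m\,\alpha(q)$, which is negligible by the choice of $q$. After this replacement $f_n$ is, up to small-block and mixing errors, the characteristic function of a normalized sum of $m$ i.i.d.\ block sums, so I would invoke the classical i.i.d.\ Berry--Esseen theorem to get a residual of order $\rho_3/(\sigma_p^3\sqrt{m})$. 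Here $\sigma_p^2 \asymp p$ and the third absolute moment $\rho_3 \asymp p^{3/2}$ of a block sum follow from Rosenthal-type moment inequalities for $\alpha$-mixing sequences, so the ratio is $O(1)$ and the residual is of order $m^{-1/2} \asymp \sqrt{(p+q)/n}$.

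The main obstacle is the simultaneous calibration of $p$, $q$, and $T$ so that all three error sources --- the smoothing term $C/T$, the combined mixing-replacement error $m\,\alpha(q)$ and small-block remainder, and the i.i.d.\ residual $m^{-1/2}$ --- are each of order $\log^2 n/\sqrt{n}$. The exponential mixing rate is exactly what permits $q = \Theta(\log n)$ and $T \asymp \sqrt{n}/\log n$; carefully tracking the logarithmic factors these choices introduce, and verifying that the block moment bounds hold uniformly in $n$, is where the bulk of the technical work lies, and is precisely the content established in~\cite{tikhomirov_convergence_1981}.
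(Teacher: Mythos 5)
Your proposal targets the right statement, but note first that the paper does not prove this theorem at all: it is imported verbatim from \cite[Theorem 2 with $\delta=1$]{tikhomirov_convergence_1981}, so the relevant comparison is with Tikhomirov's proof. That proof is \emph{not} the Bernstein blocking scheme you sketch. It is the Stein--Tikhomirov method: one derives an approximate differential equation of the form $f_n'(t) \approx -t f_n(t)$ for the characteristic function of $S_n$, where the error terms are controlled by mixing/covariance estimates over dependency ranges of length $\Theta(\log n)$ (this is where the $\log^2 n$ factor originates); solving the equation shows $|f_n(t) - e^{-t^2/2}|$ is small for all $|t| \le c\sqrt{n}/\log n$, and only then is Esseen's smoothing inequality applied. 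Crucially, no buffer blocks are ever discarded and no i.i.d.\ Berry--Esseen theorem is invoked on block sums.

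This matters because your route has a genuine gap: the parameter calibration you defer to the citation in your final paragraph is not merely technical, it is impossible, and that is precisely why Tikhomirov needed a different method. Quantify your three error sources. Esseen smoothing forces $T \geq \sqrt{n}/\log^2 n$ for the $C/T$ term to meet the target. Removing the small blocks, given only third moments, costs a Kolmogorov-distance error that is a fixed power of $q/(p+q)$ (a Rosenthal-plus-Markov optimization gives about $(q/(p+q))^{3/8}$, and no choice of exponent rescues you); alternatively, keeping the removal in the characteristic-function domain via $|\bbE e^{itS_n} - \bbE e^{itS_n'}| \le |t|\,\bbE|R_n|/\sqrt{\Var(s_n)}$ and integrating against $dt/|t|$ up to $T \asymp \sqrt{n}/\log n$ yields an error of order $\sqrt{mq}/\log n$, which exceeds any $\log^{O(1)} n/\sqrt{n}$ target since $mq \ge 1$. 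Meanwhile the i.i.d.\ Berry--Esseen step on $m$ blocks contributes $\Theta(m^{-1/2}) = \Theta(\sqrt{(p+q)/n})$. Forcing the small-block error down to near $n^{-1/2}$ requires $p$ of order $n/\log^{O(1)} n$, hence $m = \log^{O(1)} n$ and a block-CLT error of order $\log^{-O(1)} n$; optimizing the trade-off instead gives roughly $(q/n)^{1/4}$, i.e.\ an $n^{-1/4}$-type rate up to logarithms. The exponential mixing rate, via $q = \Theta(\log n)$ and $m\,\alpha(q) = n^{-\Omega(1)}$, neutralizes only the factorization error; the other two terms are in structural tension regardless of how fast $\alpha$ decays. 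So blocking plus the i.i.d.\ Berry--Esseen theorem cannot reach $\log^2 n/\sqrt{n}$, and a proof must instead exploit the differential-equation structure of $f_n$ as in the cited work. (A minor additional point: positivity of the long-run variance $\sigma^2$ does not follow from the stated hypotheses; it is an implicit nondegeneracy assumption hidden in the normalization by $\sqrt{\Var(s_n)}$, and Tikhomirov assumes it explicitly.)
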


\begin{reptheorem}{thm:visits_anti_concentration}
    Let~$(X_t)_{t \geq 1}$ be an aperiodic, irreducible Markov chain with finite state space~$\mathcal{X}$ and admitting a stationary distribution. For~$x \in \mathcal{X}$ and~$t \in \bbN$, let~$N_t(x)$ be the number of visits to state~$x$ up to round~$t$ (included):
    \begin{equation*}
        N_t(x) := \sum_{s=1}^t \bbOne \{ X_s = x \}.
    \end{equation*}
    If~$|\mathcal{X}|>1$ and~$X_1 \sim \pi$, then for every~$x \in \mathcal{X}$, and every~$t$ large enough, and every constant $c>0$, there exists $\varepsilon = \varepsilon(c)$ such that $0<\varepsilon <1$ and
    \begin{equation*}
       \sup_{m \in \mathbb{N}} \bbP \pa{|N_t(x) - m| \leq c\sqrt{\Var(N_t(x))}}\leq 1-\varepsilon.
    \end{equation*}
\end{reptheorem}

\begin{proof}
    Let~$Y_t := \bbOne \{ X_t = x \} - \pi_x$, and for~$i,j \in \mathcal{X}$, let
    \begin{equation*}
        p_{i,j}^{(t)} := \bbP(X_{t} = j \mid X_0 = i).
    \end{equation*}
    By the convergence theorem for finite Markov chains (see, e.g., \cite[Theorem 8.9]{billingsley_probability_1995}), there are constants~$\rho \in (0,1)$ and $C>0$ such that for every~$i,j \in \mathcal{X}$,
    \begin{equation} \label{eq:convergence_theorem}
        |p_{i,j}^{(t)} - \pi_j| \leq C \rho^t.
    \end{equation}
    Therefore, by a classical argument (see, e.g., \cite[Example 27.6]{billingsley_probability_1995}), $(Y_t)_{t \in \bbN}$ is~$\alpha$-mixing with~$\alpha(t) \leq |\mathcal{X}| \, C \rho^t$.
    Moreover, $\bbE(Y_t) = 0$, $\bbE(|Y_t|^3) < +\infty$, and~$(Y_t)_{t \in \bbN}$ is stationary (since~$(X_t)_{t \in \bbN}$ is stationary), so we can apply \Cref{thm:alpha_mixing_berry_esseen}: there is a constant~$A := A(|\mathcal{X}|,C,\rho)$ such that
    \begin{equation} \label{eq:applied_berry_esseen}
        \sup_{z \in \bbR} \quad \left| ~ \bbP\pa{\frac{N_t(x) - t \, \pi_x}{\sqrt{\Var(N_t(x))}} < z} - \Phi(z) ~ \right| \leq A \, \frac{\log^2 t}{\sqrt{t}},
    \end{equation}
    where $\Phi$ is recalled in \Cref{eq:phi},
    and where we have used the identity~$\sum_{s=1}^t Y_s = N_t(x) - t \, \pi_x$ which follows directly from the definition of~$Y_t$. 
    
    Therefore, if $W_t = \frac{N_t(x) - t \, \pi_x}{\sqrt{\Var(N_t(x))}}$ we can write, for every $N \in \mathbb{N}$
    \begin{align*}
       & |N_t(x) - m| \leq c \sqrt{\Var(N_t(x))} \\ &\iff (N_t(x) - t \, \pi_x) \in \left[ m - t \, \pi_x - c\sqrt{\Var(N_t(x))} , m - t \, \pi_x + c\sqrt{\Var(N_t(x))} \right] \\
        &\iff W_t \in \left[ \frac{m - t \, \pi_x - c\sqrt{\Var(N_t(x))}}{\sqrt{\Var(N_t(x))}} , \frac{m - t \, \pi_x + c\sqrt{\Var(N_t(x))}}{\sqrt{\Var(N_t(x))}} \right],
    \end{align*}
    and we denote the last interval with $ \left[ z_1(t,m) , z_2(t,m) \right]$.
    By \Cref{eq:applied_berry_esseen},
    \begin{equation}
        \bbP \pa{ W_t \in \left[ z_1(t,m) , z_2(t,m) \right] } \leq \pa{\Phi(z_2(t,m)) - \Phi(z_1(t,m))} +2 A \, \frac{\log^2 t}{\sqrt{t}}.
        \label{eq:W_t_interval}
    \end{equation}
    Moreover, it holds that, for any $m \in \mathbb{N}$ we can express the Gaussian integral in terms of the error function $\mathrm{erf}$
    \[\Phi(z_2(t,m)) - \Phi(z_1(t,m)) = \frac{1}{\sqrt{2\pi}}\int_{z_1(t,m)}^{z_2(t,m)} e^{-s^2/2}ds \leq \frac{1}{\sqrt{2 \pi}}\int_{-\ell}^{+\ell} e^{-s^2/2}ds = \mathrm{erf}\pa{\frac{\ell}{\sqrt{2}}}\]
    where $\ell = \frac{1}{2}(z_1(t,m)-z_2(t,m)) = c$, and so $\mathrm{erf}(\ell/\sqrt{2}) := \delta(c)$ that is independent from $m$ and $t$. Hence, it holds that
    \[\sup_{m\in \mathbb{N}}(\Phi(z_2(t,m)) - \Phi(z_1(t,m)))  \leq \delta\]
    which, together with \cref{eq:W_t_interval}, proves that, for sufficiently large $t$, there exists $\varepsilon(c)>0$  such that
    \[\sup_{m \in \mathbb{N}}\Prob{W_t \in [z_1(t,m),z_2(t,m)]} \leq 1-\varepsilon,\]
    which concludes the proof of \Cref{thm:visits_anti_concentration}.
\end{proof}

\begin{lemma}    \label{lem:sum_geometric}
    Let $W_1,\dots,W_n$ be a sequence of i.i.d. geometric random variable with success probability $p$. Then, we have that
    \[\Prob{\sum_{i=1}^{n}W_i \geq k} = \Prob{\mathrm{Bin}(k,p) \leq n}\]
\end{lemma}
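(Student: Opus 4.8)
The plan is to prove the identity by embedding the geometric variables into a single Bernoulli process, so that a statement about a sum of waiting times becomes a statement about a binomial count. Concretely, I would fix the convention $\Prob{W_i = m} = (1-p)^{m-1}p$ for $m \geq 1$ (the number-of-trials-until-first-success convention, which is the one consistent with the relation $\tau \sim 2 + \mathrm{Geom}(p)$ used in \Cref{lem:anti_concentration_Nt}), and introduce an infinite sequence $(\xi_j)_{j \geq 1}$ of i.i.d.\ $\mathrm{Bernoulli}(p)$ trials. Writing $T_i$ for the index of the $i$-th success and $W_i := T_i - T_{i-1}$ for the gaps (with $T_0 := 0$), a standard check shows the $W_i$ are i.i.d.\ with the prescribed geometric law and that $\sum_{i=1}^{n} W_i = T_n$ is exactly the index of the $n$-th success. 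This reduces the left-hand side to an event about $T_n$.

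The heart of the argument is then the waiting-time/count duality: the $n$-th success occurs at a given time (or later) precisely when a suitable prefix of the trials contains too few successes. Translating the event on $\sum_{i=1}^n W_i = T_n$ into the complementary event on $\sum_{j=1}^{k} \xi_j$ — the number of successes among the first $k$ trials, which is distributed as $\mathrm{Bin}(k,p)$ — yields the right-hand side. Collecting the boundary term at the endpoint of the prefix is what produces the clean equality with $\Prob{\mathrm{Bin}(k,p) \leq n}$.

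The step that requires the most care, and which I expect to be the main (if minor) obstacle, is pinning down the boundary indices: whether the relevant prefix has length $k$ or $k-1$, and whether the threshold on the count is $n$ or $n-1$. These depend on the exact geometric convention (support starting at $0$ versus $1$) and on strict-versus-weak inequalities, and an off-by-one here silently changes the statement. I would therefore nail down the endpoints by checking the trivial case $n=1$ (where $\sum_i W_i = W_1$ and $\Prob{W_1 \geq k}$ must match $\Prob{\mathrm{Bin}(k,p)\leq 1}$) before asserting the general form, and make sure the convention matches the one forced by $\mathbb{E}[\tau] = 1/\pi_B = (2p+1)/p$.

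As a self-contained alternative that avoids the coupling, one can give a purely algebraic proof: expand the left-hand side using the negative-binomial pmf
\[
\Prob{\textstyle\sum_{i=1}^n W_i = m} = \binom{m-1}{n-1}\, p^{\,n}(1-p)^{\,m-n}, \qquad m \geq n,
\]
expand $\Prob{\mathrm{Bin}(k,p)\leq n}$ as a finite binomial sum, and prove the resulting finite identity by induction on $n$ using Pascal's rule (equivalently, via the incomplete-beta-function representation of both cumulative distribution functions). This route is more mechanical but makes the boundary bookkeeping fully explicit, which is exactly where the subtlety lies.
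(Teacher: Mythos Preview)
Your approach—embedding the $W_i$ into a single Bernoulli sequence and reading off the waiting-time/count duality—is exactly the paper's, whose entire proof is the one-liner ``asking that $\sum_{i=1}^n W_i \geq k$ is like asking that, in $k$ Bernoulli trials, we have less than $n$ successes.'' Your flagged concern about endpoints is well-founded: under the support-at-$1$ convention you fix, the clean identity is $\Prob{\sum_{i=1}^n W_i \geq k} = \Prob{\mathrm{Bin}(k-1,p)\leq n-1}$, and your proposed $n=1$ sanity check would indeed expose the off-by-one in the lemma as stated; the paper's application in \Cref{lem:anti_concentration_Nt} actually uses the equivalent (and correct) form $\Prob{\sum_{i=1}^n W_i \leq m} = \Prob{\mathrm{Bin}(m,p)\geq n}$, so nothing downstream is affected.
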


\begin{proof}
    Asking that $\sum_{i=1}^n W_i \geq k$ is like asking that, in $k$ Bernoulli trials, we have less than $n$ successes.
\end{proof}

\end{document}